\documentclass[11pt,a4paper]{article}
\usepackage{empheq}
\usepackage{geometry}
 \geometry{
 a4paper,
 left=28mm,
 right=28mm,
 top=30mm,
 bottom=28mm
 }
\usepackage{tikz}
\usepackage{amsthm}
\usepackage{amsmath}
\usepackage{amsmath,bm}
\usepackage{graphicx}
\usepackage{bbm}
\usepackage{booktabs}
\usepackage{graphicx} 
\usepackage{amsfonts}
\usetikzlibrary{scopes,matrix,positioning}
\usepackage{algorithm,algorithmic}
\usepackage{blindtext}
\usepackage{amssymb}
\usepackage{authblk}
\usepackage{mathtools}
\usepackage{caption}
\usepackage{textcomp}
\usepackage{enumitem}
\usepackage{hyperref}
\usepackage{footnote}
\usepackage{threeparttable}
\usepackage[titletoc]{appendix}
\usepackage{authblk}
\usepackage{setspace} 
\usepackage{subcaption}
\DeclarePairedDelimiter{\ceil}{\lceil}{\rceil}
\DeclarePairedDelimiter\floor{\lfloor}{\rfloor}

\providecommand{\keywords}[1]{\textbf{\textit{Keywords:}} #1}

\newtheorem{theorem}{Theorem}

\newtheorem{lemma}{Lemma}
\newtheorem{remark}{Remark}
\newtheorem{proposition}{Proposition}

\newtheorem{definition}{Definition}
\newtheorem{example}{Example}

\makeatletter
\renewcommand\@biblabel[1]{}
\renewenvironment{thebibliography}[1]
     {\section*{\refname}%
      \@mkboth{\MakeUppercase\refname}{\MakeUppercase\refname}%
      \list{}%
           {\leftmargin0pt
            \@openbib@code
            \usecounter{enumiv}}%
      \sloppy
      \clubpenalty4000
      \@clubpenalty \clubpenalty
      \widowpenalty4000%
      \sfcode`\.\@m}
     {\def\@noitemerr
       {\@latex@warning{Empty `thebibliography' environment}}%
      \endlist}
\makeatother

\tikzset{
  mymx/.style={matrix of math nodes,nodes=myball,column sep=4em,row sep=-1ex},
  myball/.style={draw,circle,inner sep=0pt},
  mylabel/.style={midway,sloped,fill=white,inner sep=1pt,outer sep=1pt,below,
    execute at begin node={$\scriptstyle},execute at end node={$}},
  plain/.style={draw=none,fill=none},
  sel/.append style={fill=green!10},
  prevsel/.append style={fill=red!10},
  route/.style={-latex,thick},
  selroute/.style={route,blue!50!green}
}

\tikzset{   cirwhite/.style={draw=gray,circle,fill=white,minimum size=1pt,inner sep=2pt,line width=0.2mm},
            cirred/.style={draw=gray,circle,fill=red,minimum size=1pt,inner sep=2pt,line width=0.2mm},
}

\onehalfspacing
\setcounter{page}{1}

\author[1]{Babak Mahdavi-Damghani}
\author[2]{Konul Mustafayeva}
\author[2]{Cristin Buescu}
\author[1]{Stephen Roberts}

\affil[1]{Oxford-Man Institute of Quantitative Finance, Oxford, UK}

\affil[2]{Department of Mathematics, King's College London, London, UK \vspace{3ex}}

\date{\vspace{-3ex}}
 \title{Portfolio Optimization for Cointelated Pairs: SDEs vs Machine Learning\vspace{2ex}}




\begin{document}
\maketitle  

\begin{abstract}
With the recent rise of Machine Learning as a candidate to partially replace classic Financial Mathematics methodologies, we investigate the performances of both in solving the problem of dynamic portfolio optimization in continuous-time, finite-horizon setting for a portfolio of two assets that are intertwined. 

In Financial Mathematics approach we model the asset prices not via the common approaches used in pairs trading such as a high correlation or cointegration, but with the cointelation model in \cite{CointAndInferred} that aims to reconcile both short-term risk and long-term equilibrium. We maximize the overall P\&L with Financial Mathematics approach that dynamically switches between a mean-variance optimal strategy and a power utility maximizing strategy. We use a stochastic control formulation of the problem of power utility maximization and solve numerically the resulting HJB equation with the Deep Galerkin method introduced in \cite{DGM}. 

We turn to Machine Learning for the same P\&L maximization problem and use clustering analysis to devise bands, combined with in-band optimization. Although this approach is model agnostic, results obtained with data simulated from the same cointelation model as FM give an edge to ML. 
\end{abstract}
\keywords{Pairs Trading, Cointelation, Portfolio Optimization, Stochastic Control, \\
Band-wise Gaussian Mixture, Deep Learning.\vspace{22ex}}

\noindent Correspondence to: Konul Mustafayeva, Department of Mathematics, King's College London, Strand, London,  WC2R 2LS. \\
E-mail: konul.mustafayeva@kcl.ac.uk 
81

\thispagestyle{empty}
\clearpage

\section{Introduction} 
In a financial market with two assets that exhibit clear dependence we go beyond high correlation which is used in pairs trading, and model asset prices with the hybrid model of cointelation introduced in \cite{CointAndInferred}. We solve the portfolio optimization problem employing a more general set of admissible strategies than long/short strategies used in pairs trading. 

A pairs trading strategy involves matching a long position with a short position in two assets with a high correlation. Pairs trading was pioneered in the mid 1980s by a group of quantitative researchers from Morgan Stanley (for introduction to pairs trading see \cite{Vidyamurthy}). The securities in a pairs trade must have a high positive correlation, which is the primary driver behind the strategy’s profits. 

Pairs trading is based on the high historical correlation of two assets and a trader's view that the two securities will maintain a specified correlation. A pairs trading strategy is applied when a trader identifies a correlation discrepancy. More specifically, the trader monitors performance of two historically correlated securities. When the correlation between the two securities temporarily weakens, i.e. the spread widens, the trader applies a trading strategy which shorts the high asset and buys the low asset. As the spread narrows again to some equilibrium value, a profit results. 

However, many authors argue that correlation is an inappropriate measure of dependency in financial markets, since returns often exhibit a nonlinear co-dependence (e.g. \cite{Alexander}, \cite{Wilmott}). Mahdavi-Damghani, et al. \cite{Damgani} showed that measured correlation of the returns of a mean-reverting processes is misleading: a strong positive correlation does not necessarily imply that two stochastic processes move in the same direction and vice versa. Cointegration, on the other hand, tests the long-term equilibrium relationships between assets and has been extensively used in pairs trading ( see \cite{Vidyamurthy}). Cointegration tests do not measure how well two variables move together, but rather whether the difference between their means remains constant. Sometimes series with high correlation will also be cointegrated, and vice versa, but this is not always the case.

The cointelation model was introduced in \cite{CointAndInferred} as a hybrid model which reconciles correlation and cointegration by capturing both short-term risk and long-term equilibrium. The rationale for the long term risk is that during the time of rare market crashes all assets prices fall. However, in the more bullish periods, the short term risk increases, the long term risk becomes less pronounced and the “macro” driver less visible. These influences are accompanied with mean reversion forces from one asset to the other.  

In this setting we consider a continuous-time, finite horizon portfolio optimization problems for pairs of assets whose prices follow the cointelation model in \cite{CointAndInferred}. Generally, the optimization problem is to find the optimal control
\begin{equation}
   \tilde{w}^* = \underset{\tilde{w} \in \mathcal{A}}{\text{argmax}} \ \ U(X^{\tilde{w}}_t, Y^{\tilde{w}}_t) \label{eq:generalOptimProb}
\end{equation}
where $U(x)$ is a utility function, $\tilde{w} = (\tilde{w}_1, \tilde{w}_2)$ is a vector of proportions of wealth invested in each asset, $\mathcal{A}$ is a set of admissible strategies: either $\tilde{w}_1 = - \tilde{w}_2$ (long/short) or $\tilde{w}_1, \tilde{w}_2 > 0$ with $\tilde{w}_1 + \tilde{w}_2 = 1$ (long only).

We solve the portfolio optimization problem in \eqref{eq:generalOptimProb} with Financial Mathematics and Machine Learning methodologies and compare their performance. In Financial Mathematics approach we use SDE evolution of asset prices, whereas the Machine Learning approach does not assume an underlying model and applies generally to any pair of assets. 

In Section \ref{sec:CointelationModel} we review the cointelation model. In Section \ref{sec:FM_approach} we use the classical Financial Mathematics criteria: mean-variance optimization and power utility maximization. In Section \ref{sec:ML_approach} we use clustering analysis from Machine Learning to solve the P\&L maximization problem. We present the results of each approach in Section \ref{sec:resultsAndConclusion} and discuss them comparatively.
 
\section{Review of cointelation model for pairs of asset }\label{sec:CointelationModel}
We first present the usual way correlation is calculated in the financial industry (see e.g. p.274 \cite{Wilmott}, \cite{Alexander}). Assume we have two assets with prices modeled by stochastic processes $(X_t)_{t\geq 0}$ and $(Y_t)_{t\geq 0}$ on a probability space $(\Omega , \mathcal{F}, \mathbb{P})$. We have $N$ observations of $X$ and $Y$ at intervals $\Delta t$, i.e. $X(t_i)$ and $Y(t_i)$ with $i=1,...,N$ and $\Delta t = t_i - t_{i-1}$. Here  $\Delta t \in \{1, 5, 22, 252\}$ corresponds to daily, weekly, monthly and yearly data. The $\Delta t$-returns on $i$-th data point of assets $X$ and $Y$ is
\begin{eqnarray}
    R_X(t_i, \Delta t) &=& \frac{X(t_i + \Delta t)  - X(t_i)}{X(t_i)} \\
    R_Y(t_i, \Delta t) &=& \frac{Y(t_i + \Delta t) - Y(t_i)}{Y(t_i)}.
\end{eqnarray}
The sample volatilities of time series of asset prices $X$ and $Y$ are then
\begin{eqnarray}
 \sigma_X(\Delta t) &=& \sqrt{\frac{1}{\Delta t(N-1)}\sum_{i=1}^N (R_X(t_i, \Delta t) - \Bar{R}_X)^2} \\
 \sigma_Y(\Delta t) &=& \sqrt{\frac{1}{\Delta t(N-1)}\sum_{i=1}^N (R_Y(t_i, \Delta t) - \Bar{R}_Y)^2}, 
\end{eqnarray}
where $\Bar{R}_X, \Bar{R}_Y$ are the sample average of all the returns in the series of $X$  and $Y$, respectively. 

The sample covariance between the returns of assets $X$ and $Y$ is given by
\begin{equation}
 \sigma_{XY}(\Delta t) = \frac{1}{\Delta t(N-1)}\sum_{i=1}^N (R_X(t_i, \Delta t) - \Bar{R}_X)(R_Y(t_i, \Delta t) - \Bar{R}_Y).
\end{equation}
In this paper we consider the measured correlation, which is the sample cross-correlation given by
\begin{equation}
\rho_{XY}(\Delta t) = \frac{\sigma_{XY}(\Delta t)}{\sigma_X(\Delta t) \sigma_Y(\Delta t)}. \label{eq:measuredCorr}
\end{equation}

For correlation to be an appropriate choice of measure of co-dependence the assumption of linear dependency between series needs to be satisfied (see Chapter 1.4 \cite{Alexander}). Often in financial markets with a non-linear dependence between returns, the correlation is an inappropriate measure of co-dependency and is misleading, especially when used to capture long-term relationship between assets (see \cite{Damgani} and \cite{CointAndInferred}).  

An alternative statistical measure to correlation is cointegration. If two time series $X_t$ and $Y_t$ are integrated\footnote{A time series $X_t$ is integrated of order $d$ if $(1-L)^dX_t$ is a stationary process. Here $L$ is a lag operator.} of order $d$ and there exists $\beta$ such that a linear combination $X_t + \beta Y_t$ is integrated of order less that $d$, then $X_t$ and $Y_t$ are cointegrated (see \cite{EngleGranger}). Since the spread of cointegrated asset prices is mean reverting, they have a common stochastic trend, i.e. the asset prices are ‘tied together’ in the long term, although they might drift apart in the short-term (see \cite{Alexander_Optim_heging}). Because the cointegration requires sophisticated statistical analysis, it has not been used as widely as correlation in the financial industry. 


Although correlation and cointegration are related, they are different concepts. High correlation does not necessarily imply high cointegration, and neither does high cointegration imply high correlation (e.g. see Figure 4 in \cite{Damgani}). Two assets may be perfectly correlated over short timescales yet diverge in the long run, with one growing and the other decaying. Conversely, two assets may follow each other, with a certain finite spread, but with any correlation, positive, negative or varying. 

Mahdavi-Damghani \cite{CointAndInferred} proposed cointelation as a hybrid model that aims to mediate between correlation and cointegration. It captures both short-term and long-terms relationships between the assets. 

\begin{definition} \label{def:cointmodel} Consider a filtered probability space by $(\Omega, \mathcal{F}, {(\mathcal{F}_t)}_{(t \geq 0)}, \mathbb{P})$, with the historical probability measure, $\mathbb{P}$. The cointelation model for a pairs of assets with prices $X_t$ and $Y_t$ defined in \cite{CointAndInferred} as 
\begin{eqnarray}\nonumber
&& dX_{t} = \mu X_t dt + \sigma X_t dW_t,   \\\nonumber
&& dY_{t} = \kappa(X_{t}-Y_{t})dt  + \eta Y_{t} d\tilde{W}_t, \\ 
&& d \langle W, \tilde{W} \rangle_t = \rho dt, \label{eq:modelCointel}
\end{eqnarray}
where $\mu \in \mathbb{R}$, $\sigma>0$, $X(t_0) = x_0$ are the drift, diffusion coefficients and initial value of asset price $X$; $0< \kappa \leq 1$, $\eta > 0$, $Y(t_0) = y_0 > 0$ are the rate of mean reversion, volatility and initial value of the asset price $Y$; $(\tilde{W}(t))_{t\geq 0}$ and $(W(t))_{t\geq 0}$ are two correlated Brownian motions with constant correlation coefficient $-1 \leq \rho \leq 1$ that generate the filtration ${(\mathcal{F}_t)}_{(t \geq 0)}$.
\end{definition}
The processes $(X)_{t\geq 0}$ and $(Y)_{t\geq 0}$ are called \textit{the leading process} and \textit{the lagging process}, respectively. This is due to the fact that the lagging process reverts around the leading process.

We present here the concepts of \textit{inferred correlation} function and \textit{number of crosses} formula introduced in \cite{CointAndInferred} in order to device a test whether two pairs are cointelated .  

Let $\rho_{XY}^*(\Delta t)$ be the inferred correlation function between two times series of cointelated asset prices defined as follows
\begin{equation}
\rho^{*}_{XY}(\Delta t) = \displaystyle \sup_{0<\tilde{\Delta}t\leq \Delta t}\rho_{XY}(\tilde{\Delta} t).  \label{eq:inferredCorrDefinition}
\end{equation}
Sometimes there may not be enough data to calculate $\Delta t$-inferred (measured) correlation of cointelated assets. In \cite{CointAndInferred} the following formula for approximation of inferred correlation \eqref{eq:inferredCorrDefinition} was proposed via examining various data sets:
\begin{equation}
\rho^{*}_{XY}(\Delta t) \approx  \rho + (1-\rho) \left[1 - \exp\left(- \lambda\kappa(\Delta t -1)\right)\right],  \label{eq:InferredCorr}
\end{equation}
where  $\kappa \in [0,1]$, $\lambda >0$, $\rho \in [-1,1]$. The parameter $\lambda \approx 1.75$ for "regular financial data", although it is itself a function in general. Thus, if one does not have enough empirical data to calculate, for example, the yearly (252 days) inferred correlation, the formula in equation \eqref{eq:InferredCorr} allows to approximate it using only $\kappa$ and $\rho$ parameters of cointelation model in \eqref{eq:modelCointel} and setting $\Delta t =252$, $\lambda \approx 1.75$.

The motivation for inferred correlation approximation formula \eqref{eq:InferredCorr}, is that in the discrete version of the processes in equation \eqref{eq:modelCointel} the measured correlation increases as the time increment, $\Delta t$, increases (e.g. correlations calculated using daily, weekly, monthly returns). Moreover, the measured correlation of cointelated pairs will converge to $1$ faster as the speed of mean reversion parameter $\kappa$ increases. If we set $\rho=-1$ in \eqref{eq:modelCointel}, the inferred correlation of cointelated asset prices may cover the whole correlation spectrum $[-1,1]$ (see Figure \ref{fig:CointelationWithRhoM1}).

Another way for testing if two times series are cointelated is to study how many times the normalized series cross paths. If one discretizes equation \eqref{eq:modelCointel}, then one can approximate the expectation of the number of times, $\Gamma_{x,y}$, the two stochastic process, $x=X_{i\in[1,2,\ldots N]}$ and $y=Y_{i\in[1,2,\ldots N]}$, cross paths as follows
\begin{equation}
E[\Gamma_{x,y}(\kappa,N)] \approx  N\left[\gamma(1-\kappa) + \frac{1}{2} \sqrt{\kappa} \right] \label{eq:rollingCrosses}
\end{equation}
with $N$ is the length of the data, $\gamma$ is a positive constant and $\kappa$ is the speed of mean reversion in equation \eqref{eq:modelCointel}.

Compared to the number of times purely correlated SDEs (eg: without the mean reversion component, i.e. when $\kappa = 0$) the number of times the discrete version of the cointelated SDEs cross paths is larger than if they were random, and the bigger the $\kappa$ the more often the paths of discretized SDEs \textit{cross each other} per unit of time. 

Then two stochastic processes are cointelated (see \cite{CointAndInferred}) if
\begin{itemize}
    \item Inferred correlation formula in equation \eqref{eq:InferredCorr} is verified;
    \item The number of crosses formula in equation \eqref{eq:rollingCrosses} is verified;
    \item the underlying assets have a reasonable physical connection that would suggest their spread should mean revert (e.g. oil and BP share prices). 
\end{itemize}

The parameters in cointelation model \eqref{eq:modelCointel} can be estimated using the inferred correlation formula \eqref{eq:InferredCorr} and the number of crosses formula \eqref{eq:rollingCrosses} (see \cite{CointAndInferred}). Similarly to the variance reduction methodology described in \cite{Damgani}, \cite{CointAndInferred}, we can define 
\begin{eqnarray}
B_{+} &=& \left|\frac{\max(X_{t}-Y_{t},t\in[0,T])}{2}\right|, \\ \label{eq:Bplus}
B_{-} &=& \left|\frac{\inf(X_{t}- Y_{t},t\in[0,T])}{2}\right|.
\label{eq:Bmminus}\nonumber
\end{eqnarray}
We note that the estimation of $\kappa$ has a higher variance when \begin{equation}
Z_{\rho} = B_{+}>|X_{t}-Y_{t}|>B_{-}, \label{eq:Zsigma}
\end{equation}
where $\rho$, on the other hand has quality samples. The reverse is true when
\begin{equation}
Z_{\kappa}=|X_{t}-Y_{t}|>B_{+} \bigcup |X_{t}-Y_{t}|<B_{-}. \label{eq:Ztheta}
\end{equation}
We can therefore sample  $\kappa$ in $Z_{\kappa}$ and $\rho$ in $Z_{\rho}$. Figure \ref{fig:BplusBminus} illustrates this.




\section{Financial Mathematics approach for portfolio optimization problem} \label{sec:FM_approach}
We consider the portfolio of two assets and model the their prices with the cointelation in \eqref{eq:modelCointel}. We approach the optimization problem of this portfolio with classic Financial Mathematics criteria: mean-variance and power utility maximization. Since the cointelated assets are characterized by both correlation and mean-reversion components, we formulate the mean-variance optimization problem for long only strategies and we calculate the optimal strategies to make profit on correlation. To make profit on mean-reversion property of the cointelated assets we use stochastic control formulation of the power utility maximization problem for long/short strategies and calculate the optimal weights. We then maximize portfolio P\&L by dynamically switching between these two optimal strategies.

\subsection{Mean-variance optimization}\label{sec:LitReviewMarkowitz}
We first review fundamental notions and concepts for mean-variance optimization.
\paragraph{Returns:}
A portfolio considers a combination of $n$ potential assets, with an initial capital $V(0)$ and weights $w_1, w_2,..., w_n$, such that $\sum_i^n w_i =1$, $w_iV(0)$ is the amount invested in security $i$ for $i= 1,2,...,n$ at time $t=0$.  
The number of shares to invest in security $i$ at time $t=0$ is 
\begin{equation}
n_i = \frac{w_iV(0)}{S_i(0)}.
\end{equation}
The value of portfolio at time $t$ is
\begin{equation}
V(t) = \sum_{i=1}^N n_i S_i(t).
\end{equation}
Given the number of shares $n_i$ with $i = 1,...,n$, the percentage of the portfolio invested in asset $i$ at time $t$ is
\begin{equation}
w_i(t) = \frac{n_iS_i(t)}{\sum_{i=1}^N n_i S_i(t)}, \label{eq:percentTotalwealth}
\end{equation}
with $\sum_{i=1}^N w_i(t) = 1$.
The rate of return of asset $i$ at time $t$ (i.e. over $[t -\Delta t,t]$) is given by
\begin{equation}
R_i(t) = \frac{S_i(t) - S_i(t-\Delta t)}{S_i(t-\Delta t)} = \frac{S_i(t)}{S_i(t-\Delta t)} - 1.
\end{equation}
The rate of return of portfolio, $R_p(t)$, is then
\begin{equation}
R_p(t) = \frac{V(t) - V(t-\Delta t)}{V(t-\Delta t)}.
\end{equation}
We can show that the return of portfolio is a linear combination of the returns of individual assets as follows
\begin{eqnarray} \nonumber
R_p(t) &=& \frac{V(t)}{V(t-\Delta t)} - 1 = \sum_{i=1}^N \frac{n_i S_i(t)}{\sum_{j=1}^N n_i S_i(t-\Delta t)} - 1 \\ \nonumber
&=& \sum_{i=1}^N \frac{n_i S_i(t-\Delta t)S_i(t)}{\sum_{j=1}^N n_i S_i(t-\Delta t)S_i(t-\Delta t)} - 1 = \sum_{i=1}^N w_i(t)(R_i(t)+1) - 1 \\ 
&=&  \sum_{i=1}^N w_i(t)R_i(t). \label{eq:rateOfreturn}
\end{eqnarray}
Sometimes it is more convenient to use log returns, which are defined for asset $i$ by 
\begin{equation}
r_i(t) = \ln\left(\frac{S_i(t)}{S_i(t-\Delta t)}\right).
\end{equation}
It should be pointed out that for short period of time the log return is approximately equal to the rate of return 
\begin{equation}
r_i(t) = \ln\left(\frac{S_i(t)}{S_i(t-\Delta t)}\right) = \ln(R_i(t) +1) \approx R_i(t). \label{eq:logReturn}
\end{equation}
Therefore we do not distinguish between these two returns, as long as the time increment, $\Delta t$, is short. Going forward we will use daily logarithmic returns. Thus, the return of portfolio, $r_p$, at time at time $t$ in this case becomes
\begin{equation}
r_p = \sum_{i=1}^N w_ir_i.  
\end{equation}
\paragraph{Expectation and variance of returns:}
By the linearity property of expected value operator, the expected return of portfolio, $E(r_p)$, is 
\begin{eqnarray} 
E(r_p) = E\left( \sum_{i=1}^N w_i r_i \right) = \sum_{i=1}^N w_i E(r_i) = \sum_{i=1}^N w_i \mu_i = w^\top \mu , 
\end{eqnarray}
where $\mu_i$ denotes the expected return of asset $i$ and $w^\top = [w_1, w_2, ..., w_n]$, $\mu = [\mu_1, \mu_2, ..., \mu_n]^\top$.

The variance of the return of portfolio, $Var(r_p)$, is given by
\begin{eqnarray} \nonumber
Var(r_p) &=& E \left[ \left(\sum_{i=1}^N w_i r_i - E(r_p)\right)^2  \right] = E \left[\left( \sum_{i=1}^N w_i(r_i - E(r_i)) \right)^2 \right]\\ \nonumber
&=& E \left[ \left( \sum_{i=1}^N w_i(r_i - E(r_i)) \right) \left(  \sum_{j=1}^N w_j(r_j - E(r_j) )\right) \right] \\ \nonumber
 &=& \sum_{i=1}^N \sum_{j=1}^N w_iw_j \underbrace{E\left[ (r_i - E(r_i))(r_j - E(r_j))\right]}_{:= \sigma(r_i,r_j)}\\
&=& \sum_{i=1}^N \sum_{j=1}^N w_iw_j \sigma(r_i,r_j) = w^\top \Sigma w,
\end{eqnarray} 
where $\Sigma$ denotes the covariance matrix of the asset returns, composed of all covariances between the returns of assets $i$ and $j$ defined as  $\sigma(r_i,r_j)$. The variance of asset $i$’s return, which 
constitute the diagonal of the covariance matrix, is $\sigma(r_i,r_i)$.

\subsubsection*{Optimal investment strategy using mean-variance criterion}\label{sec:MVS}
We consider a portfolio consisting of two assets. The uncertainty is modelled by a probability space $(\Omega, \mathcal{F}, P)$ with a filtration $(\mathcal{F}_t)_{t\geq 0}$ generated by two-dimensional Brownian
motion: $(W, \tilde{W})$. Denote by $X(t)$ and $Y(t)$ the prices of two assets at time $t$, with dynamics following cointelation model in \eqref{eq:modelCointel}.
The investment behavior is modelled by an investment strategy $h = (h_1, h_2)$. Here, $h_i \in [0,1]$, $i = 1, 2$, denotes the percentage of total wealth invested in $i$-th asset (see equation \eqref{eq:percentTotalwealth}). Let $h_1(t)$ and $h_2(t)$ denote respectively the portfolio weights for assets $X$ and $Y$ at time $t$. The holdings are allowed to be adjusted continuously up to a fixed horizon $T$.  

Denoting by $V^h_t$ the value of portfolio at time $t$ associated to a strategy $h$ we have
\begin{equation}
V^h(t) = \frac{h_1(t)V^h(t)}{X(t)}X(t) + \frac{h_2(t)V^h(t)}{Y(t)}Y(t) , \\ \label{eq:portvalh}
\end{equation}
with initial wealth $V^h(t_0) = v_0$. We restrict our considerations to self-financing strategies, where the value of the portfolio changes only because the asset prices change, i.e. there is no inflow or withdrawal of money \cite{HarrisonKreps}. In this case the dynamic of the wealth process is 
\begin{equation}
dV^h(t)=V^h(t)\left[ h_1(t)\frac{dX(t)}{X(t)} + h_2(t)\frac{dY(t)}{Y(t)} \right]. 
\end{equation}

Let $\mathcal{A}^1$ denote the set of all \textit{admissible strategies}, $h = (h_1, h_2)$, satisfying:
\begin{itemize}
\item[(i)] Given $v_0 > 0$ the wealth process $V^{v_{0},h}(\cdot)$ corresponding to $w_0,h$ satisfies 
\begin{equation}
V^{v_0,h}(t) \geq 0, \ \ \ 0 \leq t \leq T,
\end{equation}
\item[(ii)] $h_i(t) \geq 0$ for all $i=1,2$,
\item[(iii)] $\sum_{i=1}^2h_i(t) = 1$. 
\end{itemize}

An investment strategy, $h \in \mathcal{A}^1$, is called optimal if there exists no other strategy $\tilde{h} \in \mathcal{A}^1$ such that $E(r_p(h)) \geq E(r_p(\Tilde{h)})$ and $Var(r(h)) \leq Var(r(\Tilde{h}))$ with at least one inequality being strict (see \cite{LiNg}).

We define a utility function, $U(t,h)$, as in \cite{Bodie}:
\begin{equation}
    U(t,h) = 2\tau E[r_p(t)] - \sigma^2[r_p(t)], \label{eq:utilityMVC}
\end{equation}
where $\tau \geq 0$ is the risk tolerance coefficient. Then according to \cite{Garciaa} we have the following proposition.
\begin{proposition}[Mean-Variance Criterion]
Finding an optimal strategy for mean-variance criteria is equivalent to the utility maximization problem:
\begin{equation}
\max_{h(t)} U(t, h) \label{eq:MVCproblem}
\end{equation}
with constraints
\begin{itemize}
\item $\sum_{i=1}^N h_i = 1$,
\item $h_i \geq 0 \ \ \forall i$.
\end{itemize}
and $U(t, h)$ given in \eqref{eq:utilityMVC}.
\end{proposition}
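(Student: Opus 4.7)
The plan is to prove the equivalence as a two-direction scalarization result from multi-objective optimization. For a fixed risk tolerance $\tau \geq 0$, I would show (i) any maximizer of $U(t,\cdot)$ over $\mathcal{A}^1$ is mean-variance optimal in the sense defined just above the proposition, and (ii) conversely, any mean-variance optimal strategy arises as a maximizer of $U(t,\cdot)$ for some $\tau \geq 0$. The proof will exploit the fact that $E[r_p(h)] = h^\top \mu$ is linear in $h$ and $\mathrm{Var}[r_p(h)] = h^\top \Sigma h$ is a convex quadratic in $h$, while the feasible set defined by the simplex constraints $\sum_i h_i = 1$, $h_i \geq 0$ is compact and convex.

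Forward direction: I would argue by contradiction. Suppose $h^* \in \mathcal{A}^1$ attains $\max_h U(t,h)$ but is not mean-variance optimal, so there is some admissible $\tilde h$ with $E[r_p(\tilde h)] \geq E[r_p(h^*)]$ and $\mathrm{Var}[r_p(\tilde h)] \leq \mathrm{Var}[r_p(h^*)]$, with at least one strict. Then
\begin{equation*}
U(t,\tilde h) - U(t,h^*) = 2\tau\bigl(E[r_p(\tilde h)] - E[r_p(h^*)]\bigr) + \bigl(\mathrm{Var}[r_p(h^*)] - \mathrm{Var}[r_p(\tilde h)]\bigr) \geq 0,
\end{equation*}
and strictly positive whenever $\tau > 0$ (or whenever the variance inequality is strict), contradicting the optimality of $h^*$. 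The boundary case $\tau = 0$ corresponds to pure variance minimization and must be handled separately as the degenerate endpoint of the efficient frontier.

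Backward direction: Given a mean-variance optimal $\bar h$, I would consider the image set $\mathcal{F} = \{(E[r_p(h)], \mathrm{Var}[r_p(h)]) : h \in \mathcal{A}^1\} \subset \mathbb{R}^2$. Because $E[r_p(h)]$ is linear and $\mathrm{Var}[r_p(h)]$ is convex in $h$ on a convex domain, the ``upper-left'' hull of $\mathcal{F}$ is convex, and the point $(E[r_p(\bar h)], \mathrm{Var}[r_p(\bar h)])$ lies on its Pareto frontier. A supporting hyperplane at that point produces coefficients $(2\tau, -1)$ with $\tau \geq 0$ such that $\bar h$ maximizes $2\tau E[r_p(h)] - \mathrm{Var}[r_p(h)] = U(t,h)$ over $\mathcal{A}^1$. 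This is the Karush-Kuhn-Tucker/scalarization theorem applied to a convex multi-objective program.

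The main obstacle will be the backward direction, and specifically the supporting-hyperplane step: one must verify that the relevant section of the image set is genuinely convex (which uses both the convexity of $\mathcal{A}^1$ and the convex-quadratic nature of the variance), and one must account for endpoint strategies (the global minimum-variance and maximum-mean portfolios) where the supporting $\tau$ may be $0$ or unbounded. The forward direction, by contrast, is essentially a one-line inequality once the sign of $\tau$ is fixed.
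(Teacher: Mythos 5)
The paper itself offers no proof of this proposition: it is stated as a quoted result with a citation to Garcia et al., so there is nothing internal to compare your argument against. Your scalarization argument is the standard and correct way to establish the claimed equivalence, and it is more informative than the paper's citation. The forward direction is indeed a one-line dominance inequality (note that you have silently corrected what appears to be a typo in the paper's definition of optimality, where the roles of $h$ and $\tilde h$ in the dominance inequalities are swapped; your reading is the intended Pareto one). The backward direction via a supporting hyperplane to the convex set $\mathcal{F} + \left(\mathbb{R}_{-}\times\mathbb{R}_{+}\right)$ is sound because $E[r_p(h)]$ is affine and $\mathrm{Var}[r_p(h)] = h^{\top}\Sigma h$ is convex on the compact simplex. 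Two caveats you flag deserve to be resolved rather than just noted: at $\tau = 0$ the forward direction genuinely fails in general (a minimum-variance portfolio dominated in mean by another minimum-variance portfolio maximizes $U$ without being mean-variance optimal), but in the two-asset setting of the paper $\Sigma$ is positive definite off degenerate parameter choices, so the variance is strictly convex on the simplex, the minimizer is unique, and the gap closes; and in the backward direction the supporting functional may assign zero weight to the variance at the maximum-mean vertex, in which case no finite $\tau$ works, so the equivalence should strictly be stated for the properly efficient points or for $\tau$ ranging over $[0,\infty)$ with the max-mean endpoint treated as a limit. With those two points made explicit, your proof is complete and, unlike the paper, self-contained.
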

Thus we have optimization problem in equation \eqref{eq:MVCproblem}. 
From equation \eqref{eq:rateOfreturn} we have that the rate of return of our portfolio, $R_p$, over $[t - \Delta t,t]$ is  
\begin{equation}
R_p(t) = \frac{V^h(t) - V^h(t-\Delta t)}{V^h(t - \Delta t)} = \sum_{i=1}^2h_i(t)R_i(t),
\end{equation}
where $R_i$ is the rate of return of individual assets. The log return of our portfolio, $r_p$ is given by
\begin{equation}
r_p(t) = h_1r_1(t) + h_2r_2(t), \label{eq:portReturn}
\end{equation}
where $r_i(t) \approx R_i(t)$, as we showed in equation \eqref{eq:logReturn}.
\begin{lemma} \label{lem:valueOfPortfolio}
Denote by $V^{h}(t)$ the value of the portfolio corresponding to the admissible strategy $h\in \mathcal{A}^1$. Then:
\begin{itemize}
\item[(i)] The expectation of portfolio return over $[t- \Delta t, t]$ is 
\begin{equation}
E(r_p(t)) = h_1E[r_X(t)] + h_2E[r_Y(t)]. 
\vspace{-\baselineskip}
\end{equation}\label{eq:portfolio_expectation}
\item[(ii)] The variance of portfolio return over $[t - \Delta ,t]$ is
\begin{eqnarray} 
Var(r_p(t)) &=& h_1^2Var[r_X(t)] + h_2^2Var[r_Y(t)] + 2h_1h_2Cov[r_Xr_Y(t)],
\vspace{-\baselineskip}
\end{eqnarray}\label{eq:portfolio_variance}
\vspace{-\baselineskip}
\end{itemize}
where $r(X_t)=\ln\left( \frac{X_t}{X_{t-\Delta t}}\right)$ and $r(Y_t)=\ln\left( \frac{Y_t}{Y_{t-\Delta t}}\right)$ the daily log returns of assets $X$ and $Y$ and
\begin{itemize}
    \item $E(r_X(t)) = (\mu - \frac{\sigma^2}{2})\Delta t$ is the expected return of the asset price $X$  over the horizon $[t - \Delta t,t]$;
    \item $E(r_Y(t)) = [ \ln\left( ae^{\mu \Delta t} + (Y_0 -a)e^{-\kappa \Delta t}\right) - \frac{ ce^{(2\mu+\sigma^2)\Delta t}  + de^{(\mu -\kappa +\sigma\eta\rho)\Delta t}}{2(ae^{\mu \Delta t} + (Y_0 - a)e^{-\kappa \Delta t})^2}
- \ln(Y_{t-\Delta t})] - $\\
$ \frac{(Y_0^2-c-d)e^{2(\eta^2 - \kappa)\Delta t}}{2(ae^{\mu t} + (Y_0 - a)e^{-\kappa \Delta t})^2} + \frac{1}{2}$ is the expected return of the asset price $Y$  over the horizon $[t - \Delta t,t]$;
    \item $Var(r_X(t)) = \sigma^2\Delta t$ is the variance of return of asset price $X$ over the horizon $[t -\Delta t, t]$;
    \item $Var(r_Y(t)) = \frac{ ce^{(2\mu+\sigma^2)\Delta t}}{(ae^{\mu \Delta t} + (Y_0 - a)e^{-\kappa \Delta t})^2} + \frac{de^{(\mu+\sigma\eta\rho -\kappa)\Delta t}}{(ae^{\mu \Delta t} + (Y_0 - a)e^{-\kappa \Delta t})^2}
 +\frac{(Y_0^2-c-d)e^{2(\eta^2 - \kappa)\Delta t}}{(ae^{\mu \Delta t} + (Y_0 - a)e^{-\kappa \Delta t})^2} - 1$ is the variance of return of asset price $Y$ over the horizon $[t -\Delta t, t]$;
    \item $Cov(r_X(t)r_Y(t)) = \ln\left(\frac{be^{(\mu+\sigma^2)\Delta t} + (X_0Y_0 - b)e^{(\sigma\eta\rho - \kappa)\Delta t}}{aX_0e^{2\mu \Delta t} + (X_0Y_0- aX_0)e^{(\mu -\kappa)\Delta t}}\right)$ is the covariance of returns of two asset prices $X$ and $Y$ over the horizon $[t -\Delta t, t]$.
\end{itemize} 
\end{lemma}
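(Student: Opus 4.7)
The plan is to first reduce the lemma to one–asset moments by linearity, then to compute those moments from the SDE dynamics in \eqref{eq:modelCointel} via Itô calculus (yielding linear ODEs for the required first and second moments), and finally to convert the moments of $Y_t$ and $X_tY_t$ into moments of log returns by a second-order delta expansion of $\ln(\cdot)$.

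Parts (i) and (ii) are immediate from \eqref{eq:portReturn}: since $r_p(t) = h_1 r_X(t) + h_2 r_Y(t)$ and the weights $h_1, h_2$ are treated as deterministic, (i) is linearity of $E[\cdot]$ and (ii) is the standard bilinear expansion $Var(h_1 r_X + h_2 r_Y) = h_1^2 Var(r_X) + h_2^2 Var(r_Y) + 2 h_1 h_2 Cov(r_X, r_Y)$. This reduces the problem to evaluating $E[r_X]$, $Var[r_X]$, $E[r_Y]$, $Var[r_Y]$, and $Cov(r_X, r_Y)$ under the cointelation dynamics. For the leading asset, $X$ is geometric Brownian motion, so Itô applied to $\ln X$ yields $r_X(t) = (\mu - \sigma^2/2)\Delta t + \sigma(W_t - W_{t-\Delta t})$, and hence $E[r_X(t)] = (\mu-\sigma^2/2)\Delta t$ and $Var[r_X(t)] = \sigma^2 \Delta t$ drop out at once.

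The lagging asset is where the real work lies. Since $Y$ is not log-normal, I would first compute the first two moments of $Y_t$ and the cross moment $E[X_tY_t]$ exactly from linear ODEs, and then pass to moments of $\ln Y_t$ via the delta expansion $\ln Y_t \approx \ln m_Y + (Y_t-m_Y)/m_Y - (Y_t-m_Y)^2/(2 m_Y^2)$ around $m_Y := E[Y_t]$. Concretely, taking expectations in $dY_t = \kappa(X_t - Y_t)dt + \eta Y_t d\tilde W_t$ kills the Itô integral and leaves the linear ODE $m_Y'(t) = \kappa(X_0 e^{\mu t} - m_Y(t))$, whose closed form $m_Y(t) = a e^{\mu t} + (Y_0 - a)e^{-\kappa t}$ with $a = \kappa X_0/(\mu+\kappa)$ is precisely the denominator appearing throughout the stated formulas. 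Applying Itô to the product $X_tY_t$, using $d\langle W,\tilde W\rangle_t = \rho\,dt$, I get $d(X_tY_t) = [\kappa X_t^2 + (\mu - \kappa + \sigma\eta\rho) X_tY_t]\,dt + (\text{martingale})$; combined with $E[X_t^2] = X_0^2 e^{(2\mu+\sigma^2)t}$ this is another scalar linear ODE, whose solution has the two-term exponential form $E[X_tY_t] = b e^{(2\mu+\sigma^2)t} + (X_0Y_0 - b)e^{(\mu-\kappa+\sigma\eta\rho)t}$ for a suitable constant $b$. Finally, Itô on $Y_t^2$ yields $d(Y_t^2) = [2\kappa X_tY_t + (\eta^2 - 2\kappa) Y_t^2]\,dt + (\text{martingale})$; inserting the expression just obtained for $E[X_tY_t]$ and solving produces a three-term exponential sum for $E[Y_t^2]$, which identifies the constants $c$ and $d$ of the statement.

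Assembling these, $Var(Y_t) = E[Y_t^2] - m_Y^2$ together with the delta-method identities $E[\ln Y_t] \approx \ln m_Y - Var(Y_t)/(2 m_Y^2)$ and $Var[\ln Y_t] \approx Var(Y_t)/m_Y^2$ (applied over $[t-\Delta t,t]$ and subtracting the deterministic $\ln Y_{t-\Delta t}$ to pass from $\ln Y_t$ to $r_Y(t)$) reproduces the stated formulas. For the covariance I would use the bivariate delta identity $Cov(\ln X_t, \ln Y_t) \approx \ln\bigl(E[X_tY_t]/(E[X_t]E[Y_t])\bigr)$, which immediately gives the quotient in the statement once the exponential forms of $E[X_t]$, $E[Y_t]$, $E[X_tY_t]$ are plugged in. The main obstacle is essentially bookkeeping: the three linear ODEs must be solved in the correct order $m_Y \to E[X_tY_t] \to E[Y_t^2]$, and the exponents and prefactors must be matched carefully to the authors' constants $a,b,c,d$; the delta step itself is routine but should be flagged as an approximation (exact up to second order in $Y_t - m_Y$), since $Y_t$ is not log-normal.
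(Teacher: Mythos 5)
Your proposal is correct and follows essentially the same route as the paper's Appendix A: reduce (i)--(ii) to linearity/bilinearity of expectation and variance, compute $E[Y_t]$, $E[X_tY_t]$ and $E[Y_t^2]$ in that order by taking expectations of the Itô dynamics and solving the resulting linear ODEs (with the same constants $a,b,c,d$), and then pass to log-return moments via the second-order Taylor (delta) approximations $E[\ln Y_t]\approx \ln E[Y_t]-\sigma^2[Y_t]/(2E[Y_t]^2)$, $\mathrm{Var}[\ln Y_t]\approx \sigma^2[Y_t]/E[Y_t]^2$ and $\mathrm{Cov}(\ln X_t,\ln Y_t)\approx \ln\left(E[X_tY_t]/(E[X_t]E[Y_t])\right)$. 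You also correctly flag that these last steps are approximations rather than exact identities, which matches the paper's use of cited Taylor-expansion formulas.
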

\begin{proof}
See Appendix \ref{appendix:MVC}
\end{proof}
The optimal weights for mean-variance criterion were derived in \cite{Soeryana}. We state the following proposition from \cite{Soeryana} applied to the cointelation model \eqref{eq:modelCointel}.
\begin{proposition}
The optimal solution for the problem in \eqref{eq:MVCproblem} for cointelation model \eqref{eq:modelCointel} is:
\begin{eqnarray}
h^*(t) = \frac{1}{e'\Sigma^{-1}(t)e}\Sigma^{-1}(t)e + \tau \left[ \Sigma^{-1}(t)M(t) - \frac{e'\Sigma^{-1}(t)M(t)}{e'\Sigma^{-1}(t)e}\Sigma^{-1}(t)e \right], \label{eq:optimalweighMVC}
\end{eqnarray}
with $e' = [1,1,1]$, $M(t)=[E(r_X(t)), E(r_Y(t))]$ and covariance matrix is
\begin{equation*} 
\Sigma(t) = 
  \begin{bmatrix}
    Var(r_X(t))  & Cov(r_X(t),r_Y(t))  \\
    Cov(r_X(t),r_Y(t)) & Var(r_Y(t)) 
\end{bmatrix},  
\end{equation*}
where the expressions for $E(r_X(t))$, $E(r_Y(t))$ and  $Var(r_X(t))$, $Var(r_Y(t))$, $Cov[r_X(t),r_Y(t)]$ are given above.
\end{proposition}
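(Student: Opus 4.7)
The plan is to recognise \eqref{eq:MVCproblem} as a standard equality-constrained concave quadratic program in $h$ and solve it via Lagrange multipliers, treating the non-negativity constraints as non-binding in the interior of the feasible region. First I would use Lemma \ref{lem:valueOfPortfolio} to rewrite the objective in vector form: setting $M(t) = (E[r_X(t)], E[r_Y(t)])^\top$ and letting $\Sigma(t)$ be the covariance matrix displayed in the statement, parts (i)--(ii) of the lemma give $E[r_p(t)] = h^\top M(t)$ and $Var(r_p(t)) = h^\top \Sigma(t) h$, so \eqref{eq:utilityMVC} becomes $U(t,h) = 2\tau\, h^\top M(t) - h^\top \Sigma(t) h$.

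Next I would form the Lagrangian for the equality constraint $e^\top h = 1$,
\begin{equation*}
\mathcal{L}(h,\lambda) = 2\tau\, h^\top M(t) - h^\top \Sigma(t) h - \lambda\bigl(e^\top h - 1\bigr),
\end{equation*}
differentiate in $h$, and set $\nabla_h \mathcal{L} = 0$. This yields the linear stationarity condition $2\tau M(t) - 2\Sigma(t) h - \lambda e = 0$, so that $h = \Sigma(t)^{-1}\bigl(\tau M(t) - \tfrac{\lambda}{2} e\bigr)$, which is well defined because $\Sigma(t)$ is positive definite (being the covariance matrix of two log-returns of \eqref{eq:modelCointel} whose Brownian sources are not perfectly correlated) and therefore invertible. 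Imposing $e^\top h = 1$ on this expression gives a scalar equation for $\lambda$, solved by
\begin{equation*}
\tfrac{\lambda}{2} = \frac{\tau\, e^\top \Sigma(t)^{-1} M(t) - 1}{e^\top \Sigma(t)^{-1} e}.
\end{equation*}
Substituting this value of $\lambda$ back into the expression for $h$ and grouping the $\tau$-free and $\tau$-linear contributions produces exactly the two-term formula \eqref{eq:optimalweighMVC} asserted in the proposition. Concavity of the objective (the Hessian is $-2\Sigma(t)$, which is negative definite) then certifies that this stationary point is the global maximum on the affine subspace $\{e^\top h = 1\}$.

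The main obstacle is the inequality constraint $h_i \geq 0$, which I would not incorporate directly into the Lagrangian: the closed-form expression \eqref{eq:optimalweighMVC} is the unconstrained-in-sign solution, and dealing with $h_i \geq 0$ rigorously would require KKT conditions and a case analysis on which constraints are active, producing piecewise formulas. In line with the cited derivation in \cite{Soeryana}, I would handle this by stating that \eqref{eq:optimalweighMVC} is valid on the open region of parameters $(\tau, \mu, \sigma, \kappa, \eta, \rho)$ where $h^*_1(t), h^*_2(t) \geq 0$, and that on the boundary one must project onto the simplex; verifying this interior condition for the empirically relevant regime of the cointelation model would then be deferred to the numerical section. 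A minor secondary point is the obvious typographical issue that with two assets $e$ is in $\mathbb{R}^2$ rather than $\mathbb{R}^3$, which I would correct in passing.
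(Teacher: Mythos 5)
Your derivation is correct, but it is worth noting that the paper does not actually prove this proposition at all: it simply states that the optimal weights ``were derived in \cite{Soeryana}'' and transplants the formula, with the cointelation model entering only through the moment expressions of Lemma \ref{lem:valueOfPortfolio}. Your Lagrangian argument supplies the missing derivation, and it is the standard one: writing $U(t,h)=2\tau h^\top M(t)-h^\top\Sigma(t)h$, the stationarity condition $2\tau M-2\Sigma h-\lambda e=0$ together with $e^\top h=1$ gives $\tfrac{\lambda}{2}=\bigl(\tau e^\top\Sigma^{-1}M-1\bigr)/\bigl(e^\top\Sigma^{-1}e\bigr)$, and back-substitution reproduces \eqref{eq:optimalweighMVC} exactly, with concavity certified by the Hessian $-2\Sigma(t)$. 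You also go further than the paper on two points it leaves silent: the non-negativity constraints $h_i\geq 0$ in the definition of $\mathcal{A}^1$ and in \eqref{eq:MVCproblem}, which the closed-form expression ignores and which would in general require a KKT/projection argument (your restriction to the parameter region where the unconstrained optimum is interior is the honest way to state the result); and the typographical slip $e'=[1,1,1]$, which should be $[1,1]$ for the two-asset case (likewise $M(t)$ should be a column vector). One small caveat you could add: since the entries of $\Sigma(t)$ in Lemma \ref{lem:valueOfPortfolio} are Taylor approximations of the true moments rather than the moments themselves, positive definiteness of this approximate $\Sigma(t)$ is an additional assumption rather than an automatic consequence of it being a covariance matrix.
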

Replacing these formulas for expectation, variance and covariance of the returns of asset prices in equation \eqref{eq:optimalweighMVC}, we get optimal strategies for mean-variance optimization problem. We will present numerical examples in Section \ref{sec:resultsAndConclusion}.

\subsection{Stochastic control for pairs trading \label{sec:stochasticControl}} 
\subsubsection*{Power utility maximization problem}
We now use a stochastic control approach to the power utility maximization problem. Here we mainly follow \cite{MudchanatongsukPrimbsWong}, but with modified dynamics for asset prices. More specifically, they assume the price dynamics of one of the assets is a geometric Brownian motion and model the log-spread as an Ornstein-Uhlenbeck process. We, however, assume the dynamics of asset prices are governed by the cointelation model in equation \eqref{eq:modelCointel}, where one of the assets follow the geometric Brownian motion and the second asset mean reverts around the first one. 

Let $(\Omega, \mathcal{F}, P)$ be a complete probability space with a filtration $(\mathcal{F}_t)_{t\geq 0}$ generated by two-dimensional Brownian motion: $(W, \tilde{W})$. We consider the same market as in Subsection \ref{sec:MVS}: two assets which follow the cointelation model \eqref{eq:modelCointel}.

We assume an initial wealth $v_0 > 0$ at time $t=0$. Initial wealth is held in a margin account. For simplicity we assume that the interest rate for margin account is 0, $r=0$.  Margin account restricts how much one can short or long. The holdings are allowed to be adjusted continuously up to a fixed horizon $T$. The investment behavior is modelled by an investment strategy $\pi = (\pi_1, \pi_2)$. Here, $\pi_i(t)$, $i = 1, 2$, denotes the percentage of total wealth invested in $i$-th asset at time $t$ (see equation \eqref{eq:percentTotalwealth}). Let $\pi_1(t)$, $\pi_2(t)$ be respectively the portfolio weights for assets $X$ and $Y$ at time $t$. We only allow pairs trading: short one of the asset and long the other in equal dollar amount, i.e. $\pi_1(t) = -\pi_2(t)$. In addition, we restrict our considerations to self-financing strategies.

We define admissible control and controlled process as in \cite{KornKraft}.
\begin{definition}[Control]
Given a subset $U$ of $\mathbb{R}^2$, we denote by $\mathcal{U}_0$ the set of all progressively measurable processes $\pi = \{ \pi_t,t\geq 0\}$ valued in $U$. The elements of $\mathcal{U}_0$ are called control processes.
\end{definition}
Denote by $V^\pi(t)$ the value of portfolio corresponding to strategy $\pi$ at time $t$, which is given by 
\begin{equation}
V^\pi(t) = \frac{\pi_1(t)V^\pi(t)}{X(t)}X(t) + \frac{\pi_2(t)V^\pi(t)}{Y(t)}Y(t). \label{eq:portvalpi}
\end{equation}
The dynamics of the portfolio value $V^\pi$ associated with strategy $\pi = (\pi_1, \pi_2)$ is given by
\begin{equation}
dV^\pi(t)=V^\pi(t)\left[ \pi_1(t)\frac{dX(t)}{X(t)} + \pi_2(t)\frac{dY(t)}{Y(t)} \right]
\label{eq:WD}
\end{equation}
Replacing the dynamics for $X(t)$ and $Y(t)$ into \eqref{eq:WD} we get:
\begin{equation}
dV^\pi(t) = V^\pi(t)\Big[ \pi_1(\mu dt +\sigma dW(t)) - \pi_1\left(\kappa\left(\frac{X(t)}{Y(t)} -1\right)dt + \eta d\tilde{W}(t)\right)\Big]. \\
\end{equation}
\begin{lemma}\label{prop:logspread_dynamics}
Denote $Z(t):=\frac{X(t)}{Y(t)}$. For the cointelation model \eqref{eq:modelCointel} we obtain that $Z(t)$ has the dynamics
\begin{eqnarray}
dZ(t) = [\mu +\eta^2 -\sigma\eta\rho -\kappa(Z(t) -1)]Z(t)dt + Z(t)(\sigma dW(t) +\eta d\tilde{W}(t)). \label{eq:ZDynamics}
\end{eqnarray}
\end{lemma}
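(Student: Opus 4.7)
The natural route is to apply It\^o's formula to the smooth function $f(x,y)=x/y$ (well defined since $Y_t>0$ by the lognormal-like dynamics of $Y$), since $Z(t)=f(X_t,Y_t)$. First I would record the relevant partial derivatives,
\begin{equation*}
f_x=\tfrac{1}{y},\qquad f_y=-\tfrac{x}{y^{2}},\qquad f_{xx}=0,\qquad f_{yy}=\tfrac{2x}{y^{3}},\qquad f_{xy}=-\tfrac{1}{y^{2}},
\end{equation*}
and the quadratic co-variations of $(X,Y)$ read off from the cointelation SDE system in Definition~\ref{def:cointmodel}:
\begin{equation*}
d\langle X\rangle_t=\sigma^{2}X_t^{2}dt,\qquad d\langle Y\rangle_t=\eta^{2}Y_t^{2}dt,\qquad d\langle X,Y\rangle_t=\rho\sigma\eta X_tY_t\,dt.
\end{equation*}

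Next I would plug these into the two-dimensional It\^o expansion
\begin{equation*}
dZ_t=f_x\,dX_t+f_y\,dY_t+\tfrac{1}{2}f_{xx}\,d\langle X\rangle_t+f_{xy}\,d\langle X,Y\rangle_t+\tfrac{1}{2}f_{yy}\,d\langle Y\rangle_t,
\end{equation*}
substitute the dynamics $dX_t=\mu X_tdt+\sigma X_tdW_t$ and $dY_t=\kappa(X_t-Y_t)dt+\eta Y_td\tilde W_t$, and collect the resulting $dt$, $dW_t$ and $d\tilde W_t$ coefficients. Every term will come out proportional to $X_t/Y_t=Z_t$; the mean-reversion drift, in particular, produces $-\tfrac{X_t}{Y_t^{2}}\kappa(X_t-Y_t)dt=-\kappa(Z_t-1)Z_t\,dt$, which is the source of the nonlinear drift factor $\kappa(Z_t-1)$ in the target formula. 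The It\^o correction $\tfrac{1}{2}f_{yy}\,d\langle Y\rangle_t$ contributes the $\eta^{2}Z_t\,dt$ term, and the cross term $f_{xy}\,d\langle X,Y\rangle_t$ supplies the $-\sigma\eta\rho Z_t\,dt$ term.

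Finally I would simplify to the claimed form
\begin{equation*}
dZ_t=\bigl[\mu+\eta^{2}-\sigma\eta\rho-\kappa(Z_t-1)\bigr]Z_t\,dt+Z_t\bigl(\sigma\,dW_t\pm\eta\,d\tilde W_t\bigr),
\end{equation*}
where the sign in front of $d\tilde W_t$ is determined solely by $f_y=-x/y^{2}$ and must be tracked carefully (this is the only bookkeeping subtlety, and the place most likely to produce a sign slip). There is no genuine analytic obstacle: the argument is a direct application of It\^o's formula, and the only things requiring care are (i) the positivity of $Y_t$, which ensures $f$ is $C^{2}$ on the state space of $(X,Y)$, and (ii) consistent sign handling so that the drift genuinely factorises into the OU-in-$Z$ form exhibited on the right-hand side.
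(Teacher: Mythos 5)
Your proposal is correct and is essentially the paper's own argument: the paper invokes It\^o's quotient rule for $X/Y$, which is exactly your two-dimensional It\^o expansion of $f(x,y)=x/y$ with the same quadratic covariations, and every drift contribution ($\mu Z$, $-\kappa(Z-1)Z$, $\eta^2 Z$, $-\sigma\eta\rho Z$) matches term by term. On the one subtlety you flag, the sign of the $d\tilde W$ coefficient, your bookkeeping via $f_y=-x/y^{2}$ yields $-\eta Z(t)\,d\tilde W(t)$, which is precisely what the paper's own proof (and its subsequent use of the $Z$-dynamics in the state process) obtains, so the $+\eta\,d\tilde W$ appearing in the displayed statement of the lemma is a sign typo rather than a defect in your argument.
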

\begin{proof}
By Ito's quotient rule:
\begin{equation}
d\left(\frac{X(t)}{Y(t)}\right) = \frac{dX(t)}{X(t)}\frac{X(t)}{Y(t)} - \frac{dY(t)}{Y(t)}\frac{X(t)}{Y(t)} +  \frac{d\langle Y,Y\rangle_t}{Y(t)^2}\frac{X(t)}{Y(t)} - \frac{d\langle X,Y\rangle_t}{X(t)Y(t)}\frac{X(t)}{Y(t)}.
\end{equation}
Writing this in terms of $Z(t)$ gives
\begin{eqnarray}\nonumber
dZ(t) = Z(t)(\mu dt + \sigma  dW_t - \kappa (Z(t) - 1)dt - \eta d\tilde{W}(t) +  \frac{\eta^2Y^2(t)}{Y^2(t)}dt - \sigma\eta\rho dt)&& \\ 
= [\mu + \eta^2 - \sigma\eta\rho - \kappa(Z(t) - 1)]Z(t)dt + (\sigma dW(t) -\eta d\tilde{W}(t))Z(t),&&
\end{eqnarray}
which proves the lemma.
\end{proof}

For each control process $\pi \in \mathcal{U}_0$ we rewrite the dynamics of two-dimensional state process, $P=(V^\pi,Z)$, as follows
\begin{equation}
dP(t) = a(t,P(t),\pi(t))dt + b(t,P(t),\pi(t))dB(t). \label{eq:TwoDstateprocess}
\end{equation}
with initial value of $P(t_0) = p_0$ and $B = (W,\tilde{W})$ being the two-dimensional Brownian motion. The process $P$ is called the controlled process.
Let $[t_0,T]$ with $0\leq t_0<T<\infty$ be the relevant time interval and define $Q := [t_0,T)\times \mathbb{R}^2$. The coefficient functions 
\begin{eqnarray}
a \ \ &:& \ \ Q \times U \rightarrow \mathbb{R}^2, \\ \nonumber
b \ \ &:& \ \ Q \times U \rightarrow \mathbb{R}^{2\times 2},
\end{eqnarray}
are all continuous. Further, for all $\pi \in U$ let $a(\cdot, \cdot, \pi)$ and $b(\cdot, \cdot, \pi)$ be in $C^1(Q)$. We then define
\begin{definition}[Admissible control]
Denoting $\mathcal{A}^2$ the set of all admissible controls, we say a control $\left\lbrace \pi(t)\right\rbrace_{t\in[t_0,T]}$ will be called \textit{admissible} if the following conditions hold
\begin{itemize}
\item[(i)] $\forall k \in \mathbb{N}$ the integrability condition
\begin{equation}
E\left( \int_{t_0}^{T} |\pi(s)|^kds \right) < \infty
\end{equation}
is satisfied,
\item[(ii)] the corresponding state process $P^\pi$ satisfies
\begin{equation*}
E^{t_0,p_0}\left( \sup_{t \in [t_0,T]}|P^\pi(t)|^k\right) < \infty,
\end{equation*}
\item[(iii)] only pairs trading is allowed: short one of the asset and long the other
\begin{equation}
\pi_1 = -\pi_2. \label{eq:pairsTrading}
\end{equation} 
\end{itemize} 
\end{definition}
Since we consider self-financing portfolio, then by equation \eqref{eq:pairsTrading} the dynamics of the state process, $P = (V^\pi, Z)$, becomes
\begin{eqnarray*}
dV^\pi(t)=V^\pi(t)\left[ (\pi_1[\mu - \kappa(Z(t) -1)])dt  + \pi_1[\sigma dW(t)) + \eta d\tilde{W}(t)] \right], \ \ \ \ \ V^\pi(0)=v_0,&& \\ 
dZ(t)=[\mu +\eta^2 -\sigma\eta\rho -\kappa(Z(t) -1)]Z(t)dt + [\sigma dW(t) - \eta d\tilde{W}(t)]Z(t), \ \ Z(0)=z_0.&&
\end{eqnarray*}

\subsubsection*{Optimal investment strategy}
We assume that an investor's preference is represented by the power utility function
\begin{equation}\label{eq:powerUtility}
    U(x) = \frac{1}{\gamma} x^\gamma ,
\end{equation}
with $x \geq 0$ and risk aversion parameter $\gamma < 1$. Our aim is to maximize the objective functional $J$ over all admissible controls, i.e. determine an admissible control $\pi(\cdot)$ such that for each initial value $(t_0, v_0)$ the utility functional below is maximized:
\begin{equation}
J(t_0,v_0, z_0;\pi) := \mathbb{E}\left[U(V^\pi(T)) | V_{t_0} = v_0, Z_{t_0}=z_0 \right].
\end{equation}
The optimization problem is to find $\tilde{v}(t,v,z)$ and $\pi \in \mathcal{A}^2$ such that
\begin{equation}
\tilde{v}(t,v,z) := \sup_{\pi(\cdot) \in \mathcal{A}^2} J(t,v,z,\pi) = J(t,v,z,\pi^*). \label{eq:SCP}
\end{equation}
Consider the function $G(t,v,z)$ such that $G \in C^{1,2}(Q)$. The Hamilton-Jacobi-Bellman (HJB) equation corresponding to the stochastic control problem \eqref{eq:SCP} is 
\begin{eqnarray} 
\frac{\partial G}{\partial t}(t,v,z) + \sup_{\pi\in\mathcal{A}^2}\mathcal{L}^{\pi}G(t,v,z) = 0, \label{eq:HJB_original}
\end{eqnarray}
subject to terminal condition 
\begin{equation}
G(T,v,z) = v^\gamma.
\end{equation}
The infinitesimal generator, $\mathcal{L}^\pi G(t,v,z)$ in \eqref{eq:HJB_original} associated with the two dimensional state process $P = (V,Z)$ is given by
\begin{eqnarray}\nonumber
\mathcal{L}^\pi G(t,v,z) &=& \frac{1}{2}[\pi^2_1(\sigma^2 - 2\sigma \eta \rho +\eta^2) v^2 G_{vv} + 2\pi_1 (\sigma^2 - 2\sigma \eta \rho +\eta^2) vzG_{vz} +  \\ \nonumber
&& (\sigma^2 - 2\sigma\eta\rho  + \eta^2)z^2G_{zz}] + [\pi_1[\mu - \kappa(z-1)]]vG_v + \\
&&[ \mu +\eta^2 -\sigma\eta\rho - \kappa(z - 1)]zG_z. \label{eq:newinfinitesimalGenerator} 
\end{eqnarray}
\begin{theorem}
If there exists an optimal control $\pi^*(\cdot)$ then $G$ coincides with the value function:
\begin{equation*}
G(t,v,s) = \tilde{v}(t,v,z) = J(t,v;\pi^*).
\end{equation*}
\end{theorem}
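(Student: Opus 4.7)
The plan is to use a standard verification argument. I would fix an initial point $(t_0,v_0,z_0) \in Q$ and an arbitrary admissible control $\pi \in \mathcal{A}^2$, and then apply Itô's formula to the process $G(t,V^\pi(t),Z(t))$ on $[t_0,T]$. Using the dynamics of the controlled two-dimensional state process $P=(V^\pi,Z)$ derived at the end of Section \ref{sec:stochasticControl}, the drift part of $dG$ is exactly $G_t(t,V^\pi(t),Z(t)) + \mathcal{L}^\pi G(t,V^\pi(t),Z(t))$, while the martingale part is the stochastic integral against $(W,\tilde W)$.

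Next, I would invoke the HJB equation \eqref{eq:HJB_original}: since for every admissible $\pi$ we have $G_t + \mathcal{L}^\pi G \leq G_t + \sup_{\pi}\mathcal{L}^\pi G = 0$, the drift in the Itô expansion is nonpositive. Taking conditional expectations at $(t_0,v_0,z_0)$ and using the terminal condition $G(T,v,z)=v^\gamma = \gamma U(v)$ (up to the harmless factor $1/\gamma$ absorbed into the definition of $U$, which one can check is consistent with how $G$ is set up), one obtains
\begin{equation*}
G(t_0,v_0,z_0) \;\geq\; \mathbb{E}^{t_0,v_0,z_0}\!\left[G(T,V^\pi(T),Z(T))\right] \;=\; \mathbb{E}^{t_0,v_0,z_0}\!\left[U(V^\pi(T))\right] \;=\; J(t_0,v_0,z_0;\pi).
\end{equation*}
Taking the supremum over $\pi \in \mathcal{A}^2$ gives $G(t_0,v_0,z_0) \geq \tilde v(t_0,v_0,z_0)$. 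For the optimal control $\pi^*$ the supremum in the HJB equation is attained, so the drift term vanishes identically along the optimal trajectory, and the same computation yields the reverse inequality $G(t_0,v_0,z_0) = J(t_0,v_0,z_0;\pi^*) \leq \tilde v(t_0,v_0,z_0)$. Combining the two inequalities proves $G=\tilde v$ and identifies $\pi^*$ as the optimizer.

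The one delicate step is the passage from "nonpositive drift pathwise" to "nonpositive expectation," which requires that the local martingale part of $G(t,V^\pi(t),Z(t))$ be a true martingale on $[t_0,T]$. The main obstacle is therefore a uniform integrability / localization argument: I would introduce a sequence of stopping times $\tau_n \uparrow T$ that localize the stochastic integrals, apply the above inequality on $[t_0,\tau_n]$, and then pass to the limit. The required moment bounds on $\pi$ and on $\sup_{t\in[t_0,T]}|P^\pi(t)|^k$ are exactly what conditions (i)--(ii) of the definition of $\mathcal{A}^2$ provide, while the multiplicative structure of the cointelation dynamics and the power form of $U$ keep the integrands polynomially controlled in $V^\pi$ and $Z$. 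Assuming sufficient regularity of $G$ (in particular $G\in C^{1,2}(Q)$ with polynomial growth in $(v,z)$, which is implicit in the setup), dominated convergence then legitimates exchanging limit and expectation and closes the argument.
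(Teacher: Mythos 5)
The paper itself offers no proof of this theorem: it is stated as the standard verification theorem of stochastic control (the authors lean on \cite{KornKraft} for the surrounding setup), so there is nothing to compare your argument against line by line. Your outline is precisely the classical verification argument --- It\^o's formula applied to $G(t,V^\pi(t),Z(t))$, nonpositive drift from the HJB inequality $G_t+\mathcal{L}^\pi G\leq 0$, localization by stopping times combined with the moment conditions (i)--(ii) of admissibility and polynomial growth of $G$ to upgrade the local martingale to a true martingale and to pass to the limit, and equality of the drift along the maximizing control --- and it is correct in all essentials. One point you wave away too quickly: the factor $1/\gamma$ between the paper's terminal condition $G(T,v,z)=v^\gamma$ and the utility $U(v)=v^\gamma/\gamma$ is genuinely harmless only when $\gamma>0$. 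The paper permits any $\gamma<1$, and for $\gamma<0$ the two terminal data differ by a \emph{negative} multiplicative constant, which reverses the relevant inequalities (maximizing $\mathbb{E}[V^\pi(T)^\gamma]$ is then minimizing $\mathbb{E}[U(V^\pi(T))]$); the clean fix is to take $G(T,v,z)=\frac{1}{\gamma}v^\gamma$ throughout, which changes nothing else since the ansatz can absorb the constant. A smaller remark: as stated the theorem hypothesizes existence of an optimal control, whereas your argument (correctly, and in the usual way) derives optimality of the pointwise maximizer of the Hamiltonian; to close the second half of the argument you should state explicitly that the feedback control $\pi^*$ produced by the first-order condition must itself be checked to lie in $\mathcal{A}^2$.
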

Using separation ansatz we reduce a 3-dimensional HJB equation in \eqref{eq:HJB_original} to the following 2-dimensional PDE:
\begin{eqnarray}\nonumber
&&\tilde{\sigma}(\gamma-1) ff_t  - \frac{1}{2}\tilde{\sigma}^2\gamma z^2f_z^2 - \frac{1}{2}\gamma[\mu -\kappa(z-1)]^2f + \frac{1}{2}\tilde{\sigma}(\gamma-1)z^2ff_{zz} - \\ \nonumber
&& \tilde{\sigma}\gamma [\mu -\kappa(z-1)]zff_z + \tilde{\sigma}(\gamma -1)[\mu + \eta^2 -\sigma\eta\rho -\kappa(z-1)]ff_z,\\
&& \ \ \ \ \ \ \ \ \ \ \ \ \ \ \ \ \ \ \ \ \ \ \ \ \text{with} \ \ \ f(T,z) = 1,\ \ \ (t,z)\in [0,T]\times \mathbb{R}, \ \ \ \forall z \in \mathbb{R}, \label{eq:2dimPDE}
\end{eqnarray}
where $\tilde{\sigma} =  \sigma^2 - 2\sigma\eta\rho +\eta^2$. 

The issue at this stage is that this PDE does not have a closed for solution. This is a non standard PDE, which is not high dimensional but is nonlinear which makes using finite difference methods or any standard numerical methods inadequate. For this reason we propose to use the "Deep Galerkin Method" to  solve the PDE in \eqref{eq:2dimPDE}. Once the solution is found, we can write the optimal strategy as
\begin{eqnarray}\nonumber
 \pi_1^* &=& - \frac{\tilde{\sigma}zG_{vz} + [\mu - \kappa(z-1)]G_v}{\tilde{\sigma}vG_{vv}} 
 = - \frac{\tilde{\sigma}z(f_zv^{\gamma -1}\gamma) + [\mu - \kappa(z-1)](fv^{\gamma -1}\gamma)}{\tilde{\sigma}v(fv^{\gamma -2}\gamma (\gamma -1))} \\ 
 &=& - \frac{\tilde{\sigma}zf_z + [\mu - \kappa(z-1)]f}{\tilde{\sigma}f (\gamma -1)} = - \frac{zf_z}{(\gamma -1)f} - \frac{[\mu - \kappa(z-1)]}{\tilde{\sigma}(\gamma-1)}.
\end{eqnarray}
See Appendix \ref{appendix:HJB} for the details.

\subsection{Deep learning for solving PDE in stochastic control \label{sec:hybridMethod}}
Without an analytical solution to the non-standard 2-dimensional PDE in \eqref{eq:2dimPDE}, we approximate the solution with an algorithm "Deep Galekin Method" (DGM) proposed in \cite{DGM}. DGM is a merger of the Galerkin method and deep neural network machine learning algorithm. The Galerkin method is a popular numerical method which seeks a reduced-form solution to a PDE as a linear combination of basis functions. The deep learning algorithm, or DGM, uses a deep neural network instead of a linear combination of basis functions. The algorithm is trained on batches of randomly sampled time and  space points, therefore it is mesh free.
\subsubsection*{Brief review of DGM}
In general case, consider a PDE with $d$ spatial dimensions:
\begin{eqnarray}\nonumber
&& \frac{\partial u}{\partial t}(t,x;\theta) + \mathcal{L}u(t,x)= 0, \ \ \ (t,x) \in [0,T]\times \Omega , \\ \nonumber
&& u(t,x) = g(t,x), \ \ \ x \in \partial \Omega , \\
&& u(t=0,x) = u_0(x), \ \ \ x \in \Omega \label{eq:generalPDE}
\end{eqnarray}
where $x \in \Omega \subset \mathbb{R}^d$ and $\mathcal{L}$ is an operator of all the other partial derivatives. The goal is to approximate the $U(t,x)$ with deep neural network $f(t,x;\theta)$. Here $\theta\in\mathbb{R}^K$ are the neural network parameters. We want to minimize the objective function associated to the problem \eqref{eq:generalPDE} which consists of three parts:
\begin{enumerate}
\item[1.]  A measure of how well the approximation satisfies the PDE:
\begin{equation}
\left\|\frac{\partial f}{\partial t}(t,x;\theta) - \mathcal{L}f(t,x;\theta)\right\|^2_{[0,T]\times \Omega, \nu_1}.
\end{equation}
\item[2.] A measure of how well the approximation satisfies the boundary condition:
\begin{equation}
\left\|\frac{\partial f}{\partial t}(t,x;\theta) - g(t,x)\right\|^2_{[0,T]\times \partial \Omega, \nu_2}.
\end{equation}
\item[3.] A measure of how well the approximation satisfies the initial condition:
\begin{equation}
\left\|\frac{\partial f}{\partial t}(0,x;\theta) - u(0,x)\right\|^2_{\Omega, \nu_3}.
\end{equation}
\end{enumerate}
Here all three errors are measured in terms of $L^2$-norm, i.e. $\left\| f(y)\right\|^2_{\mathcal{Y},\nu} = \int_{\mathcal{Y}}|f(y)|^2\nu(y)dy$ with $\nu(y)$ being a density on region $\mathcal{Y}$. 

The sum of all three terms above gives us the objective function associated with the training of the neural network: 
\begin{eqnarray}\nonumber
J(f) &=& \left\|\frac{\partial f}{\partial t}(t,x;\theta) - \mathcal{L}f(t,x;\theta)\right\|^2_{[0,T]\times \Omega, \nu_1} + \\
&& \left\|\frac{\partial f}{\partial t}(t,x;\theta) - g(t,x)\right\|^2_{[0,T]\times \partial \Omega, \nu_2} + \left\|\frac{\partial f}{\partial t}(0,x;\theta) - u(0,x)\right\|^2_{\Omega, \nu_3}.
\end{eqnarray}
Thus, the goal is to find a set of parameters $\theta$ such that the function $f(t,x;\theta)$ minimizes the error $J(f)$. When the dimension $d$ is large, estimating $\theta$ by directly minimizing $J(f)$ is infeasible. Therefore, one can minimize the error $J(f)$ using a machine learning approach: stochastic gradient descent, where we use a sequence of time and space points drawn randomly.
The algorithm for DGM method is described in Algorithm \ref{algo:DGM} below.
\algsetup{indent=2em}
\newcommand{\DGM}{\ensuremath{\mbox{\sc Deep Galerkin Method}}}
\begin{algorithm}[h!]
\caption{\DGM()}
\begin{algorithmic}[1]
\label{algo:DGM}
\REQUIRE $\mathcal{L}f(), u(), g()$
\ENSURE $L^1_n + L^2_n + L^3_n$ is minimized
\medskip
\textbf{\\ Generate random points}: \\
\STATE $(t_n,x_n) \gets \mathcal{U}\sim\left[0,1\right]^2$\\
\STATE $(\tau_n,z_n) \gets \mathcal{U}\sim\left[0,1\right]^2$\\
\STATE $w_n \gets \mathcal{U}\sim\left[0,1\right]$\\
\STATE $s_n \gets ((t_n,x_n), (\tau_n,z_n), w_n)$\\

\textbf{\\ Calculate the squared error}: \\
\STATE $L^1_n  \gets \left(\frac{\partial f}{\partial t}(t_n,x_n;\theta_n) - \mathcal{L}f(t_n,x_n;\theta_n)\right)^2$

\STATE $L^2_n  \gets 
 \left(\frac{\partial f}{\partial t}(\tau_n,z_n;\theta_n) - g(\tau_n,z_n)\right)^2$

\STATE $L^3_n  \gets \left(\frac{\partial f}{\partial t}(0,x_n;\theta_n) - u(0,w_n)\right)^2$

\STATE $G(\theta_n, s_n)  \gets L^1_n + L^2_n + L^3_n$

\textbf{\\ Take a descent step at the random points}: \\
\STATE $- \underset{{\theta_n}}{\mathrm{argmax}} \ G(\theta_n, s_n)$
\STATE $\alpha_n \gets \alpha_{n-1}*\lambda$ 
\STATE $\theta_{n+1} \gets \theta_{n} - \alpha_{n} \nabla_{\theta}G(\theta_n, s_n)$\\

\textbf{Repeat until tolerance level $10^{-8}$ for convergence criterion  is achieved}
\end{algorithmic}
\end{algorithm}
\begin{remark}
The learning rate, $\alpha_n$, is a configurable hyperparameter\footnote{In machine learning, a hyperparameter is a parameter whose value is set before the learning process begins whereas, the values of other parameters are derived via training.} used in the training of neural networks that controls how much to change the model in response to the estimated error. Each time the model weights are updated. Learning rate has a small positive value, often in the range between $0.0$ and $1.0$. Similar to \cite{SolvingPDEwithDL}, we set $\alpha_0 = 0.001$. Note that our learning rate $\alpha_n$ must decrease with $n$, see \cite{DGM}, and a simple enough way to do that is by using an exponential weighted method where $\alpha_n \gets \alpha_{n-1}*\lambda$ with $\lambda \in \left]0,1\right[$.
\end{remark}
The neural network (NN) architecture used in DGM is like a long short-term networks (LSTMs) though with small differences, see \cite{DGM}. We describe below the architecture of this NN:
\begin{equation*}
\begin{subequations}
\begin{aligned}
&S^1 = \sigma(\mathbf{w}^1\cdot \mathbf{x}+b^1) \\
&Z^l = \sigma(\mathbf{u}^{z,l}\cdot \mathbf{x}+\mathbf{w}^{z,l}\cdot S^l+b^{z,l}) & l = 1, \ldots, L\\
&G^l = \sigma(\mathbf{u}^{g,l}\cdot \mathbf{x}+\mathbf{w}^{g,l}\cdot S^l+b^{g,l}) & l = 1, \ldots, L \\
&R^l = \sigma(\mathbf{u}^{r,l}\cdot \mathbf{x}+\mathbf{w}^{r,l}\cdot S^l+b^{r,l}) & l = 1, \ldots, L\\
&H^l = \sigma(\mathbf{u}^{h,l}\cdot \mathbf{x}+\mathbf{w}^{h,l}\cdot (S^l\odot R^l)+b^{h,l}) & l = 1, \ldots, L \\
&S^{l+1} = (1-G^l)\odot H^l + Z^l \odot S^l & l = 1, \ldots, L\\
&f(t,\mathbf{x},\mathbf{\theta}) = \mathbf{w} \cdot S^{L+1} + \mathbf{b}
\end{aligned}
\end{subequations}
\end{equation*}
with $\odot$ denoting Hadamard multiplication, L number of layers and $\sigma$ the activation function. The rest of the subscript refer to the neurones for our NN architecture of Figures \ref{fig:birdEyeDGM} and \ref{fig:detailedDGM}.
\begin{remark}
We can see the Bird Eye view of the DGM \cite{SolvingPDEwithDL, DGM} method in Figure \ref{fig:birdEyeDGM} and its details in Figure \ref{fig:detailedDGM}. The rationale is explained in \cite{SolvingPDEwithDL, DGM}.
\end{remark}

\subsubsection*{Testing DGM on Merton problem}
The method was tested with several nonlinear, high-dimensional PDEs independently in \cite{SolvingPDEwithDL} and \cite{DGM}, including nonlinear HJB equations. We have tested the DGM algorithm on HJB equation for the Merton problem ourselves. More specifically, Figures \ref{fig:mertonNNAnalitycal} and \ref{fig:mertonNN} show the plots of the analytical and approximated surface with DGM solution. Figure \ref{fig:mertonerror} shows the difference between analytical and approximate solution. The approximation is good. Most of the time, the error is between $0\%$ and $1\%$. The approximate solution does not do as well around $t=0$ (the maximum error of $4\%$ is around $t=0$). This corroborates with the findings in \cite{SolvingPDEwithDL}.


\subsubsection*{Solution to our PDE problem using DGM}\label{sec:DGMforOurpde}
Recall the PDE we want to solve is given in equation \eqref{eq:2dimPDE}. In the absence of a closed form solution to this PDE we approximate the solution with the DGM algorithm described above. Figure \ref{fig:changeInRhoMu} shows the approximate solution to the PDE in \eqref{eq:2dimPDE} for different parameter values. Recall, once we have the numerical solution $f$ for the PDE above, we obtain the optimal weights as following:
\begin{eqnarray}\nonumber
 \pi_1^* &=& - \frac{\tilde{\sigma}zG_{vz} + [\mu - \kappa(z-1)]G_v}{\tilde{\sigma}vG_{vv}} = - \frac{\tilde{\sigma}z(f_zv^{\gamma -1}\gamma) - [\mu - \kappa(z-1)](fv^{\gamma -1}\gamma)}{\tilde{\sigma}v(fv^{\gamma -2}\gamma (\gamma -1))} \\ 
 &=& - \frac{\tilde{\sigma}zf_z - [\mu - \kappa(z-1)]f}{\tilde{\sigma}f (\gamma -1)} = - \frac{zf_z}{(\gamma -1)f} - \frac{[\mu - \kappa(z-1)]}{\tilde{\sigma}(\gamma-1)}, \label{eq:optimalWeightSC}
\end{eqnarray}
with $\pi_1^* = - \pi_2^*$.

\subsection{Dynamic Switching between optimal strategies of mean-variance and power utility}\label{sec:DynamicSwitching}
Although in the previous two cases we assume that an investor has a certain risk preferences as modelled by a utility function (MVC and power utility), it is interesting to consider a limiting case where the investor can be always persuaded to go for more money (identical utility function $U(x) = x$, which is essentially the power utility function with risk aversion parameter $\gamma =1$) when deciding between MVC or power utility.

Assuming that an investors' preference  is modelled either as in equation \eqref{eq:utilityMVC} or as in equation \eqref{eq:powerUtility}, in order to improve further the portfolio returns we employ dynamic switching between the two optimal strategies
\begin{eqnarray}
 \psi^*(t) = \begin{cases} \pi^*(t), \ \ \text{if} \ \ V^{\pi^*}(t) \geq V^{h^*}(t), \\
h^*(t), \ \ \text{otherwise},
 \end{cases}
\end{eqnarray}
where $\pi^*(t)$ and $h^*(t)$ are given in equations \eqref{eq:optimalWeightSC} and \eqref{eq:optimalweighMVC} and $V^{\pi^*}(t)$ and $V^{h^*}(t)$ are given in equations \eqref{eq:portvalpi} and \eqref{eq:portvalh}. The motivation behind the dynamic switching is that the investor wants to benefit from both the mean-reversion and the correlation elements of the cointelation model \eqref{eq:modelCointel}. More specifically, as the spread between two assets increases the investor implements pairs trading and makes profit, otherwise the MVC approach is used. 

The portfolio return over investment horizon $[0,T]$ with $T=1000$ days is
\begin{equation}
    R(r_p) = \frac{V(0) - V(T)}{V(0)}.
\end{equation}
We perform 500 simulations with the same model and present in Table \ref{tab:Table_Averag} the average results. The average return at terminal time $T$ obtained by using dynamic switching optimal strategies is higher than the average returns calculated by employing MVC or power utility maximizing optimal strategies.

\section{Machine Learning formulation of the portfolio optimization problem}\label{sec:ML_approach}
\subsection{The portfolio optimization problem}
We assume an initial wealth $\tilde{w}_0 > 0$ at time $t=0$. The investment behaviour is modeled by an investment strategy $w = (w_1, w_2)$. Here $w_1(t)$, $w_2(t)$ denote the percentages of wealth invested in asset $X$ and $Y$ respectively at time $t$. Let $V(t)$ denote the portfolio value at time $t$ and $V^{PnL}(t) := V(t) - V(0)$ denote the profit ant loss (P\&L) over $[0,t]$. At each time $t$ we allow either pairs trading: $w_1(t) = -w_2(t)$ or long only strategies without leverage: $w_1(t) + w_2(t) = 1$ with $w_1(t), w_2(t) > 0$.

The general optimization problem is to find an optimal strategy, $w(t)$, such that the
terminal P\&L is maximized:  
\begin{equation}
    w^*(t) = \underset{w(t) \in \mathcal{A}}{\mathrm{argmax}} \ V^{PnL}(w,T), 
\end{equation}
where $V^{PnL}(w,T)$ is profit and loss corresponding to the strategy $w$ at time terminal time $T$. We use clustering analysis to device the bands and in each band we solve the following optimization problem
\begin{equation}
    w_i^*(t) = \underset{w_i(t) \in \mathcal{A}}{\mathrm{argmax}} \ V^{PnL}(w_i,t),
\end{equation}
where $i=1,...,n$ is the number of bands, $V^{PnL}(w_i,t)$ is  profit and loss corresponding to the strategy $w_i$ at time $t$. Then the overall solution $w^*$ is obtained via a linear interpolation of optimal weights per band $w_i^*$

The advantage of the proposed method is that we do not impose certain model on the asset prices. Only data observations are required to calculate the optimal weight, meaning that the complex SDE calibration is avoided.

\subsection{Review of Band-Wise Gaussian Mixture model \label{sec:bandwiseGaussianMix}}
We review band-wise Gaussian mixture model because it inspires our method of selecting the bands. Consider a probability space $(\Omega, \mathcal{F}, \mathbb{P})$ and let $(P_t)_{t \geq 0}$ denote the asset price. Mahdavi-Damghani and Roberts \cite{bandwiseGaussianMix} has recently introduced a generalised bumping SDE for the price dynamics of asset $P_t$. The SDE contains some secondary parameters whose purpose is empirical manual fitting. The generalized SDE is given by
\begin{equation}
dP_t= \theta_{t,\tau} (\mu_{t,\tau} - P_t) dt + \sigma P_t^{\alpha}(1-P_t^2)^{\beta} dW_t.
\label{eq:generalisedBM}
\end{equation}
Here $\theta_t$ is the speed of mean reversion, $\mu_t$ is the long term mean, $\alpha$ is the positivity flag enforcer, $\beta$ is the $[-1,+1]$ boundary flag enforcer and $\{\bigcup dW_i\}_{i=t-\tau}^{t}$ is the set of historical deviations of the assumed model's distribution (e.g.: all the historical absolute returns in the context of a normal diffusion).

This generalised SDE gives as a special case the cointelation model: take $\theta = -\mu$, $\mu = 0$, $\alpha = 1$ and $\beta = 0$ for the dynamic of $X$ in \eqref{eq:modelCointel}; take $\theta_{t, \tau} = \kappa$, $\mu_{t,\tau} = X_t$, $\alpha = 1$, $\beta = 0$ for the dynamics of $Y$ in \eqref{eq:modelCointel}. The SDE in \eqref{eq:generalisedBM} can also model:
\begin{itemize}
	\item Proportional returns (log-normal diffusion) when $\theta=0$, $\alpha=1$, $\beta=0$.
	\item Absolute returns (normal diffusion) when $\theta=0$, $\alpha=0$, $\beta=0$,
	\item Mean reverting returns where we enforce positivity of returns (e.g. CIR \cite{CIR} diffusion when $\alpha = 1/2$ and $\beta = 0$),
	\item Mean reverting returns where we do not enforce positivity of the returns (e.g OU \cite{OU} diffusion when $\alpha = 0$ and $\beta = 0$).
\end{itemize}
In general calibrating parameters of the SDE in  \eqref{eq:generalisedBM} to a real data is complex. Using data simulated with \eqref{eq:generalisedBM} their empirical distribution is approximated for the purpose of prediction by a band-wise Gaussian mixture model. This is done for a sequence of bands which are created using Machine Learning clustering method (see \cite{bandwiseGaussianMix}).

Let $\textbf{P}=\{p_1, \ldots, p_n\}$ be a set of empirical random variables sampled using equation \eqref{eq:generalisedBM} with cumulative distribution function $F(p)$ and density $f(p)$. Denote $O=\{p_{(1)}, \ldots, p_{(n)}\}$ the ordered set of $\textbf{P}$ such that $p_{(1)} < p_{(2)} < \ldots  < p_{(n)}$ and 
\begin{equation*}
    O^{i}_h=\{p_{(\ceil{n((i-1)+1)/h})}, \ldots, p_{(\floor{n(i)/h})}\}.
\end{equation*}
Then the \textit{band-wise Gaussian mixture model} for the empirical distribution function of the data simulated using the SDE in equation \eqref{eq:generalisedBM} is given as follows:
\begin{equation}
{\hat  F}_{n}(p_{i}|\mathcal{F}_{t})={\frac  {1}{n}}\sum _{{j=1}}^{h}\sum _{{i=\eta}}^{\zeta}{\mathbf  {1}}_{{p_{i}\in O^{j}_h}}
\label{eq:EmpiricalGaussianMixture}
\end{equation}
with $\eta = \ceil{n((i-1)+1)/h}$ and $\zeta=\floor{n(i)/h}$.

For example in the case bands $h=3$, using a Gaussian Mixture such that
\begin{equation}
    {\hat  F}_{n}(p_{i}|\mathcal{F}_{t})=\mathcal{N}(-3,1){\mathbf  {1}}_{{p_{t}\in O^{1}_3}}+\mathcal{N}(0,1){\mathbf  {1}}_{{p_{t}\in O^{2}_3}}+\mathcal{N}(3,1){\mathbf  {1}}_{{p_{t}\in O^{3}_3}},
\end{equation}
we obtain the approximate stratification in Figure \ref{fig:GaussianMixture}. The stratification is made so that the cardinality in each $O^{j}_h$ region remains approximately the same, as opposed to being the result of a geometrical separation function of $p_{(1)}$ and $p_{(n)}$.

Theorem 1 in \cite{bandwiseGaussianMix} ensures a good approximation of the generalised SDE \eqref{eq:generalisedBM} by the Gaussian mixture model \eqref{eq:EmpiricalGaussianMixture}. The calibration for the band-wise Gaussian mixture is given in Algorithm 
\ref{algo:pGaussianMixture}.

For our optimization problem we take a similar approach of dividing the range of observations into bands via the clustering algorithm, and then perform an optimization in each band via perturbation of weights.
\algsetup{indent=2em}
\newcommand{\BWGM}{\ensuremath{\mbox{\sc Band-Wise Gaussian Mixture}}}
\begin{algorithm}[h!]
\caption{\BWGM($P, h$)}
\begin{algorithmic}[1]
\label{algo:pGaussianMixture}
\REQUIRE array $P_{1:n}$ and number of bands $h$
\ENSURE $\Omega^{(1:h)}$, $[B^{+}_{(1:h)}, B^{-}_{(1:h)}]$ are returned.
\medskip
\textbf{\\ Sorting state}: \\
\STATE $X_{(1:h)}$ $\gets$ QuickSort($X_{1:n}$)\\
\STATE $[B^{+}_{(1:h)}, B^{-}_{(1:h)}]$ $\gets$ FindPercentileBands($X_{(1:n)}$, $h$)\\
\STATE $\Omega^{(1:\ceil{n/h})}\gets[]$\\
\textbf{\\ Allocation state}:\\

\FOR{$j=1$ to $h$}
\FOR{$i=1$ to $n$}
\IF{$B^{-}_{(1:h)}\leq P_{(i)}<B^{+}_{(1:h)}$}
\STATE
Amend($\Omega^{(j)}, P_{(i)}$)
\ENDIF
\ENDFOR
\ENDFOR

\textbf{\\ Checking Approximation state}:\\
\STATE
$\hat{\mu}_{1:h} \gets$ mean($\Omega^{(1:h)}$)\\
\STATE
$\hat{\sigma}_{1:h} \gets$ stdev($\Omega^{(1:h)}$)\\
\STATE
Print($\cup_{i=1}^{h} \mathcal{N}(\hat{\mu}_i,\hat{\sigma}_i)$)\\

\textbf{\\ Return state}:\\
\STATE $\Omega^{(1:h)}$, $[B^{+}_{(1:h)}, B^{-}_{(1:h)}]$

\end{algorithmic}
\end{algorithm}

\subsection{Optimal Machine Learning strategy}\label{sec:sigDef}
Based on the idea of band-wise Gaussian mixture model, we use clustering analysis to create bands, however not for the observed asset price data, but for the spread between two asset prices in \eqref{eq:modelCointel}, i.e. $X_t - Y_t$. Inside of each band instead of specifying the distribution as in band-wise Gaussian mixture, we test a set of strategies that maximizes the corresponding P\&L. We record the optimal strategies within each band, and in live trading, whenever the spread of asset prices falls in a certain band we employ the optimal strategy for this specific band.

We now present the trading signal that translates to investment strategy in machine learning approach.
\paragraph*{The Bayesian set-up:}
We set from equation \eqref{eq:modelCointel} $B_t = X_{t} - Y_{t}$ and have
\begin{equation*}
B_t = \{B^{+}_{n,t}, B^{+}_{n-1,t}, \ldots, B^{+}_{1,t}, B^{1}_{1,t}, \ldots, B^{-}_{n-1,t}, B^{-}_{n,t}\}, 
\end{equation*}
such that $B^{+}_{n,t} > B^{+}_{n-1,t} > \ldots > B^{+}_{1,t} > 0 > B^{-}_{1,t} > \ldots > B^{-}_{n-1,t} > B^{-}_{n,t}$.
We know that depending on the spread, the resulting approximated distribution of the samples differ \cite{bandwiseGaussianMix}. The calibration algorithm will then consist of creating as many zones as possible whilst and as many strategies as possible within these bands and test how well each strategy is doing in each band in terms of P\&L maximization. We take a \textit{direct} approach (see Remark \ref{rem:direct}) consisting of 3 strategies and their cumulative P\&Ls. Fixing the bands $[a_i, b_i]$, with $i =1, 2, ...,n$ we consider the following strategies:
\begin{itemize}
\item Strategy $S^{++}$ in which we are long both $X$ and $Y$ at time $t$ in between bands $[a_i,b_i]$ and with P\&L $V^{++}_{[a_i,b_i],t}$.
\smallskip
\item Strategy $S^{+-}$ in which we are long $X$ and short $Y$ at time $t$ in between bands $[a_i,b_i]$ and with P\&L $V^{+-}_{[a_i,b_i],t}$.
\smallskip
\item  Strategy $S^{-+}$ in which we are short $X$ and long $Y$ at time $t$ in between bands $[a_i,b_i]$ and with P\&L $V^{-+}_{[a_i,b_i],t}$.
\end{itemize}
The P\&Ls corresponding to these strategies are defined as following:
\begin{eqnarray}\nonumber
V^{++}_{[a_i,b_i],T}  &=& \sum_{t=0}^T [w^{++}_{[a_i,b_i],t}\Delta X_t + (1-w^{++}_{[a_i,b_i],t}) \Delta Y_{t}] 1_{a_i< \Delta_t\leq b_i}, \nonumber \\
V^{+-}_{[a_i,b_i],T}  &=& \sum_{t=0}^T [w^{+-}_{[a_i,b_i],t}\Delta X_t - (1-w^{+-}_{[a_i,b_i],t}) \Delta Y_{t}] 1_{a_i< \Delta_t\leq b_i}, \nonumber \\
V^{-+}_{[a_i,b_i],T}  &=& \sum_{t=0}^T [-w^{-+}_{[a_i,b_i],t}\Delta X_t + (1-w^{-+}_{[a_i,b_i],t}) \Delta Y_{t}] 1_{a_i< \Delta_t\leq b_i}. \nonumber
\end{eqnarray}
\begin{remark}\label{rem:direct}
We call this approach \textit{direct}, since ideally the number of strategies should consist of a more granular weight distribution. However for the sake of comparing with Financial Mathematics approach we consider the same set of strategies: long only, long/short.
\end{remark}
We denote the maximum P\&L achieved by each of these strategies by $V^{\mp\mp,*}_{[a_i,b_i],T}$, as given by equation \eqref{eq:maxPnLInEachBand} and define $S^{**}_{[a_i,b_i],T}$ of P\&L $V^{**}_{[a_i,b_i],T}$ (equation \eqref{eq:pDoubleStar}), the optimal strategy using Gaussian Learning in band $[a_i,b_i]$. 
\begin{equation}
\label{eq:maxPnLInEachBand}
V^{\mp\mp,*}_{[a_i,b_i],T} = \underset{w^{\mp\mp}_{[a_i,b_i],t\in[0,T]}}{\mathrm{argmax}} V^{\mp}_{[a_i,b_i],T}, \ w^{\mp\mp}_{[a_i,b_i],t} \in [0,1] 
\end{equation}
\begin{equation}
\label{eq:pDoubleStar}
V^{**}_{[a_i,b_i],T} = \max (V^{++,*}_{[a_i,b_i],T}, V^{+-,*}_{[a_i,b_i],T}, V^{-+,*}_{[a_i,b_i],T}).
\end{equation}
In live trading we recombine the optimal weights per bands into an overall optimal solution via a linear interpolation:
\begin{equation}
    w^*(t) = \sum_{i=1}^n w^*_i \mathbf{1}_{\{(X_t - Y_t) \in [a_i,b_i]\}}.\label{eq:interpolStrategy}
\end{equation}
Although we do not have a proof that the resulting interpolated strategy in \eqref{eq:interpolStrategy} is optimal, we use it as a benchmark that still improves over the results with Financial Mathematics approach. Our goal is to apply Machine Learning approach to a pair of assets that exhibit some dependence, but this approach can be used for any model, i.e. it is model agnostic.
\algsetup{indent=2em}
\newcommand{\BWGMforCoint}{\ensuremath{\mbox{\sc Band-Wise ML for Cointelation}}}
\begin{algorithm}[h!]
\caption{\BWGMforCoint($P, h$)}
\begin{algorithmic}[1]
\label{algo:pGaussianMixtureForCoint}
\REQUIRE array $P_{1:n}$ and number of bands $h$
\ENSURE $\Omega^{(1:h)}$, $[B^{+}_{(1:h)}, B^{-}_{(1:h)}]$ are returned
\medskip
\textbf{\\ Sorting state}: \\
\STATE $P_{(1:h)}$ $\gets$ QuickSort($P_{1:h}$)\\
\STATE $[B^{+}_{(1:\frac{h}{2})}, B^{-}_{(1:\frac{h}{2})}]$ $\gets$ FindPercentileBands($P_{(1:n)}$, $h$)\\
\STATE $B_{(1:h)}$ $\gets$  $[B^{+}_{(1:\frac{h}{2})}, B^{-}_{(1:\frac{h}{2})}]$ \\
\STATE $\Omega^{(1:\ceil{n/h})}\gets[]$\\
\textbf{\\ Allocation state}:\\

\FOR{$j=1$ to $h$}
\FOR{$i=1$ to $n$}
\IF{$P_{(i)} \in B^{i}$}
\STATE
Amend($\Omega^{(j)}, P_{(i)}$)
\ENDIF
\ENDFOR
\ENDFOR 

\textbf{\\ Optimize the 3 types of P\&L for each band}:\\
\FOR{$i=1$ to $h$}
\STATE $V^{++,*}_{B_i, T} \gets \underset{w^{++}_{B_i, t\in[0,T]}}{\mathrm{argmax}}  V^{++}_{B_i,T}$ \\
\STATE $V^{+-,*}_{B_i, T} \gets\underset{w^{+-}_{B_i, t\in[0,T]}}{\mathrm{argmax}}  V^{+-}_{B_i,T}$ \\
\STATE $V^{-+,*}_{B_i, T} \gets \underset{w^{-+}_{B_i, t\in[0,T]}}{\mathrm{argmax}} V^{-+}_{B_i,T}$ \\
\ENDFOR

\textbf{\\ Rank and return best strategy for each band}:\\
\FOR{$i=1$ to $h$}
\STATE $V^{**}_{B_i,T} \gets \max (V^{++,*}_{B_i,T}, V^{+-,*}_{B_{i,T}}, V^{-+,*}_{B_{i,T}})$
\STATE $S^{*}_T \gets$ ($S^{++,*}_{B_i,T},S^{+-,*}_{B_i,T},S^{-+,*}_{B_i,T}$)\\
\STATE $S^{**}_{B_{i,T}}\gets$ returnCorrespondingStrat($V^{**}_{B_{i,T}},S^{*}_T$),
\ENDFOR

\textbf{\\ Forecasting }:\\
\STATE signal$^S$, signal$^{S_l}\gets$ forecast($S^{**}_{B_{i,T}}, S_t,  S_{l,t}$)

\textbf{\\ Return buy/sell signals}:\\
\STATE signal$^S$, signal$^S_l$
\end{algorithmic}
\end{algorithm}

We further provide Algorithm \ref{algo:pGaussianMixtureForCoint} as the pseudo-code for the calibration process. Note that in both Algorithms \ref{algo:pGaussianMixture} and \ref{algo:pGaussianMixtureForCoint}, we have used a QuickSort which can be substituted by other sorting algorithms. Note that the use of self explanatory functions such as \textit{returnCorrespondingStrat(x,y)} in line 20 of Algorithm \ref{algo:pGaussianMixtureForCoint} which given the set of strategies and the P\&L returns, as its name indicates, outputs the corresponding strategy that maximizes P\&L. The function \textit{forecast(x,y,z)} in line 22 of Algorithm \ref{algo:pGaussianMixtureForCoint} takes as input the set of trained strategies and the current level of $X_t$ and $Y_t$ and returns a prediction of where the signals for the latter two should be. 
Finally the use of the \textit{argmax} function in lines 13-16 can be replaced by a simple for loop but in the interest of not making the pseudocode too crowded we have kept it this way.
\begin{remark}
In \cite{bandwiseGaussianMix} authors show that a reasonable risk manager or trader can assume the generalized SDE \eqref{eq:generalisedBM} with $\beta=0$ and an $\alpha=1$, in order to enforce positivity for the simulated scenarios of our risk factor. This very reasonable assumption would have crashed the whole risk engine if it is no longer satisfied in the real markets. The approach we advocate would have, however, been able to continue its dynamical learning scenario without any problem since it is model agnostic. 
\end{remark}

\section{Numerical results \label{sec:resultsAndConclusion}}
Figure \ref{fig:MLvsDS_1path} illustrates the ML and the DS approaches on one single simulated path. Note that when implementing the ML approach with a horizon of 1000 days, we double this data for training, i.e. we use 2000 historical daily prices. We have performed two sets of 500 simulations and we have gathered their results in the following two examples. 

\begin{example} We have simulated 500 paths of $X$ and $Y$ based on cointelation model \eqref{eq:modelCointel} with parameters $\mu=0.05, \sigma=0.17, \eta=0.16, \kappa=0.1, \rho = -0.6$. Figure \ref{fig:SC_vs_ML} illustrates that the Machine Learning approach with long/short strategies ($ML_{LS}$), on average performs slightly better in terms of P\&L than the Stochastic Control approach (SC). However, based on histogram none of the approaches perform  significantly better or significantly worse than the other at any time. \label{ex:SCvsML}
\end{example}
\begin{example}
We have simulated 500 paths of $X$ and $Y$ based on cointelation model \eqref{eq:modelCointel} with parameters $\mu=0.05, \sigma=0.17, \eta=0.16, \kappa=0.1, \rho = -0.6$. Figure \ref{fig:DS_vs_ML} illustrates how the ML approach seems to perform slightly better in terms of P\&L than the FM approach about $55\%$ of the time, while being outperformed the other $45\%$ of the time. However, based on histogram we have noticed that sometimes the ML approach is being outperformed significantly more than it outperforms FM approach. 
 \label{ex:DSvsML}
\end{example}

From histogram of performance in Figures \ref{fig:SC_vs_ML} and \ref{fig:DS_vs_ML} we have concluded that for parameters  $\mu=0.05, \sigma=0.17, \eta=0.16, \kappa=0.1, \rho = -0.6$ of cointelation model \eqref{eq:modelCointel} we have the following rankings for the approaches:
\begin{equation*}
    SC < ML_{LS} < FM< ML. 
\end{equation*}
The reason for ML with full set of strategies (long only and long/short) outperforming the DS most of the time might be the fact that in long only optimal strategies of ML approach we have more variety in weights, whereas the closed form formula \eqref{eq:optimalweighMVC} in FM gives us almost constant weights (with small fluctuations).

\subsection*{Possible directions for future work}
\subsubsection*{Multidimensional case} 
One direction for future work is to consider portfolio optimization problem for n-dimensional cointelation model. For instance, when $n=3$ we can have something of the following form:
\begin{eqnarray} \nonumber
    dS_{t}^a &=& \sigma S_{t}^a dW_t^a \\
    dS_{t}^b &=& \theta (S_{t}^a - S_{t}^b) dt + \sigma S_{t}^a dW_t^b \\ \nonumber 
    dS_{t}^c &=& \theta (S_{t}^a - S_{t}^c) dt + \sigma S_{t}^c dW_t^c
	\label{eq:cointelation3a}
\end{eqnarray}
One natural question would first be about how to model this triplet? For instance would equation \eqref{eq:cointelation3a} with $S^b$ and $S^c$ reverting around $S^a$ be more in-line with the pair from equation \eqref{eq:modelCointel} or would $S^b$ reverting around $S^a$ and $S^c$ reverting around $S^b$ be better? Are they equivalent or is one more useful? What happens as $n$ increases? We plan to examine these questions in the future. 
\subsubsection*{Application to cryptocurrencies}
Another direction for future work is to model cryptocurrencies prices with the cointelation model and construct cryptocurrency indices using the portfolio optimization approaches proposed in this paper. Cryptocurrencies offer a source of alternative alpha, therefore there has been an emergence of cryptocurrency indices in the recent past with construction methodology ranging in the spectrum of Risk Parity to Stochastic Portfolio Theory (SPT). Given the spectacular volatility of the cryptocurrency market, even though the point of the index is to reduce the overall volatility, the index position remains fundamentally long. However, using a combination of beta neutral approach (long/short strategies) with an occasional long only alternative could be the winning combination for this asset class. For this reason the cointelation model is an interesting model to use for this application.

\subsection*{Conclusion}
We have studied the portfolio optimization problem of two assets that follow the cointelation model using two approaches: Financial Mathematics and Machine Learning. We first implemented the FM approach, where we use classic financial mathematics criteria: mean-variance and power utility maximization. Without an analytical solution to the PDE \eqref{eq:2dimPDE}, we resort to the DGM method, a deep learning algorithm, to solve it numerically.
The second approach we implemented is ML using clustering. The latter approach is easier to implement, it is model agnostic, therefore avoids the complex SDE calibration. In our case the Machine Learning approach slightly outperforms the Financial Mathematics approach.
\clearpage

\section*{Appendices}
\appendix
\section{Proof of Lemma \ref{lem:valueOfPortfolio} \label{appendix:MVC}}
Since $X_t$ is a geometric Brownian motion, we have 
\begin{equation}
E[r(X_t)] = (\mu - \frac{\sigma^2}{2})\Delta t 
\end{equation}
where $X_{t-\Delta t}$ is a known constant at time $t - \Delta t$.
The expectation of log return of asset $Y$ is
\begin{equation}
E[r(Y_t)] = E[\ln(Y_t)] - \ln(Y_{t-\Delta t}),
\end{equation}
where $Y_{t-\Delta t}$ is a known constant at time $t - \Delta t$. We use Taylor expansion to approximate expected value and variance of $\ln(Y_t)$ and covariance of $\ln(Y_t)$ and $\ln(X_t)$ (see \cite{Benaroya}, p.165-167):
\begin{eqnarray}\label{eq:expec_logY_approx}
E[\ln(Y_t)] &\approx& \ln\left(E[Y_t]\right) - \frac{\sigma^2[Y_t]}{2E[Y_t]^2}, \\ 
\sigma^2[\ln(Y_t)] &\approx& \frac{\sigma^2[Y_t]}{E[Y_t]^2} \label{eq:variance_logY_appox},\\ 
\sigma[\ln(Y_t)\ln(X_t)] &\approx& \ln\left(1+\frac{\sigma[X_tY_t]}{E[X_t]E[Y_t]}\right). \label{eq:cov_logXY_approx}
\end{eqnarray}
First, we need to derive $E[Y_t]$. From equation \eqref{eq:modelCointel} we have
\begin{eqnarray}
Y_t = Y_{t-\Delta t} + \kappa\int_{t-\Delta t}^t(X_s - Y_s)ds + \eta\int_{t-\Delta t}^tY_sdZ_s.
\end{eqnarray}
Taking expectation on both sides we have
\begin{equation}
E[Y_t] = Y_{t-\Delta t} + \kappa\int_{t-\Delta t}^tE[X_s - Y_s]ds.
\end{equation}
Differentiating on both sides we get
\begin{eqnarray}
\frac{dE[Y_t]}{dt} = \kappa E[X_t] - \kappa E[Y_t] = \kappa X_{t-\Delta t} e^{\mu \Delta t} -\kappa E[Y_t].
\end{eqnarray}
Denoting $E[Y_t]=y(t)$ we obtain an ordinary differential equation (ODE):
\begin{equation}
y' = -\kappa y + \kappa X_{t-\Delta t}e^{\mu \Delta t}.
\end{equation}
The solution is given by 
\begin{equation}
y(t) = E[Y_t] = ae^{\mu \Delta t} + (Y_{t-\Delta t} - a)e^{-\kappa \Delta t},
\end{equation}
where 
\begin{equation*}
  a = \frac{\kappa X_{t-\Delta t}}{\mu+\kappa}.   
\end{equation*}
In order to derive $E[Y_t^2]$ we first compute $E[X_tY_t]$. Applying integration by parts (IBP) to \eqref{eq:modelCointel} we get
\begin{eqnarray} \nonumber
&& d(X_tY_t) = X_tdY_t + Y_tdX_t + dX_tdY_t \\ 
&& = \kappa X_t^2dt - \kappa X_tY_tdt + \eta X_tY_tdW_t +  \mu X_tY_tdt +\sigma X_tY_tdZ_t + \sigma \eta \rho X_tY_tdt.
\end{eqnarray}
Thus 
\begin{eqnarray*}
X_tY_t = X_{t-\Delta t}Y_{t-\Delta t} + \kappa \int_{t-\Delta t}^tX_s^2ds + \eta\int_{t-\Delta t}^tX_sY_sdW_s + \\
(\mu - \kappa +\sigma\eta\rho)\int_{t-\Delta t}^t X_sY_sds + \sigma \int_{t-\Delta t}^t X_sY_sdZ_s.
\end{eqnarray*}
Taking expectation and differentiating on both sides 
\begin{eqnarray}
\frac{dE[X_tY_t]}{dt} = \kappa E[X_t^2] +(\mu - \kappa + \sigma\eta\rho)E[X_tY_t]. 
 \end{eqnarray}
Denoting $E[X_tY_t]=x(t)$ we obtain ODE
\begin{equation}\label{eq:ODEforXY}
x' = \kappa E[X_t^2] + (\mu + \sigma \eta \rho - \kappa)y.
\end{equation}
Since $X_t$ is GBM, its second moment is given by
\begin{eqnarray}
E[X_t^2] = E[X_{t-\Delta t}^2e^{(2\mu - \sigma^2)\Delta t + 2\sigma W_t}] = X_{t-\Delta t}^2e^{(2\mu+\sigma^2)\Delta t}. 
\end{eqnarray}
Thus \eqref{eq:ODEforXY} becomes
\begin{equation}
x' = \kappa X_{t-\Delta t}^2e^{(2\mu + \sigma^2)\Delta t} + (\mu - \kappa + \sigma \eta \rho )y.
\end{equation}
Using variation of parameters method we get the solution
\begin{eqnarray} \label{eq:solutionXY}
x(t) = E[X_tY_t] = be^{(2\mu+\sigma^2)\Delta t} + (X_{t-\Delta t}Y_{t-\Delta t} - b)e^{(\mu - \kappa + \sigma \eta \rho )\Delta t},
\end{eqnarray}
where 
\begin{equation*}
    b = \frac{\kappa X_{t-\Delta t}^2}{\mu+\sigma^2 + \kappa - \sigma\eta\rho}.
\end{equation*}
Now we are ready to compute $E[Y_t^2]$. By It\^o's lemma the dynamics of $Y_t^2$ is
\begin{eqnarray}
dY_t^2 = 2Y_tdY_t + (dY_t)^2 = (\eta^2 - 2\kappa)Y_t^2dt + 2\kappa X_tY_tdt + 2\eta Y_t^2dZ_t. 
\end{eqnarray}
Integrating on both sides
\begin{eqnarray}
Y_t^2 = Y_0^2 + 2(\eta^2 -\kappa)\int_0^tY_s^2ds + 2\kappa \int_0^t X_sY_sds +2\eta\int_0^tY_s^2dZ_s.  
\end{eqnarray}
Taking expectation on both sides and and differentiating 
\begin{equation}
\frac{dE[Y_t^2]}{dt} =  2(\eta^2 - \kappa)E[Y_t^2] + 2\kappa E[X_tY_t].
\end{equation}
Defining $E[Y_t^2]=z(t)$ and replacing the value for $E[X_tY_t]$ form equation \eqref{eq:solutionXY} we obtain an ODE
\begin{equation*}
z' = (\eta^2 - \kappa)z + 2\kappa be^{(2\mu+\sigma^2)\Delta t} + 2\kappa(X_{t-\Delta t}Y_{t-\Delta t} - b)e^{(\mu - \kappa + \sigma \eta \rho )\Delta t}.
\end{equation*}
Using again variation of parameters we obtain the following solution
\begin{eqnarray} 
z(t) = E[Y_t^2] = ce^{(2\mu+\sigma^2)\Delta t} + de^{(\mu -\kappa + \sigma\eta\rho )\Delta t} + (Y_{t-\Delta t}^2-c-d)e^{2(\eta^2 - \kappa)\Delta t}, 
\end{eqnarray}
with $c=\frac{2\kappa b }{2\mu+\sigma^2-2\eta^2 +2\kappa}$ and $d = \frac{2\kappa(X_{t-\Delta t}Y_{t-\Delta t} -b)}{\mu - 2\eta^2+\kappa+\sigma\eta\rho}$.

Now we are ready to approximate $E[\ln(Y_t)]$. From \eqref{eq:expec_logY_approx} we have
\begin{eqnarray}\nonumber
&& E[\ln(Y_t)] \approx \ln[E[Y_t]] -\frac{E[Y_t^2]}{2E[Y_t]^2}+ \frac{1}{2} = \ln\left( ae^{\mu \Delta t} + (Y_{t-\Delta t} -a)e^{-\kappa \Delta t}\right) + \frac{1}{2}\\
&& - \frac{ ce^{(2\mu+\sigma^2)\Delta t} + de^{(\mu -\kappa +\sigma\eta\rho)\Delta t}}{2(ae^{\mu \Delta t} + (Y_{t-\Delta t} - a)e^{-\kappa \Delta t})^2} -  \frac{(Y_{t-\Delta t}^2-c-d)e^{2(\eta^2 - \kappa)\Delta t}}{2(ae^{\mu \Delta t} + (Y_{t-\Delta t} - a)e^{-\kappa \Delta t})^2}
\end{eqnarray}
and 
\begin{eqnarray}
&&  E[r(Y_t)] = E[\ln(Y_t)] - \ln(Y_{t-\Delta t}) \approx  \ln\left( ae^{\mu \Delta t} + (Y_{t-\Delta t} -a)e^{-\kappa \Delta t}\right) + \frac{1}{2} -\\
& & \frac{ ce^{(2\mu+\sigma^2)\Delta t} + de^{(\mu -\kappa +\sigma\eta\rho)\Delta t}}{2(ae^{\mu \Delta t} + (Y_{t-\Delta t} - a)e^{-\kappa \Delta t})^2} - \frac{(Y_{t-\Delta t}^2-c-d)e^{2(\eta^2 - \kappa)\Delta t}}{2(ae^{\mu \Delta t} + (Y_{t-\Delta t} - a)e^{-\kappa \Delta t})^2} - \ln(Y_{t-\Delta t}) \nonumber
\end{eqnarray}
From \eqref{eq:variance_logY_appox} we have
 \begin{eqnarray} \nonumber
&& Var[r(Y_t)] = Var[\ln(Y_t)] \approx \frac{E[Y_t^2]}{E[Y_t]^2} - 1 = \\
&&\frac{ ce^{(2\mu+\sigma^2)\Delta t} + de^{(\mu+\sigma\eta\rho -\kappa)\Delta t}}{(ae^{\mu \Delta t} + (Y_{t-\Delta t} - a)e^{-\kappa \Delta t})^2} + \frac{(Y_{t-\Delta t}^2-c-d)e^{2(\eta^2 - \kappa)\Delta t}}{(ae^{\mu \Delta t} + (Y_{t-\Delta t} - a)e^{-\kappa \Delta t})^2} -1
 \end{eqnarray}
and  
 \begin{equation}
    Var[r(X_t)] = Var[\ln(X_t)] = \sigma^2 \Delta t.
 \end{equation}
From \eqref{eq:cov_logXY_approx} we obtain the covariance:
\begin{eqnarray}\nonumber
&Cov[r(X_t)r(Y_t)] = Cov[\ln(X_t)\ln(Y_t)] \approx \ln\left(\frac{E[X_tY_t]}{E[X_t]E[Y_t]} \right)  \\
&\approx \ln\left(\frac{be^{(2\mu+\sigma^2)\Delta t} + (X_{t-\Delta t}Y_{t-\Delta t} - b)e^{(\mu -\kappa + \sigma\eta\rho)\Delta t}}{aX_0e^{2\mu \Delta t} + (Y_{t-\Delta t}X_{t-\Delta t}- aX_{t-\Delta t})e^{(\mu-\kappa)\Delta t}}\right).
\end{eqnarray}

\section{Dimension reduction of 3-dim HJB \eqref{eq:HJB_original}} \label{appendix:HJB}
\noindent For ease of notation let $\tilde{\sigma} = \sigma^2 - 2\sigma\eta\rho +\eta^2$ and rewrite \eqref{eq:HJB_original}:
\begin{eqnarray}\nonumber
&&G_t + \sup_{\pi_1}\{\frac{1}{2}(\pi_1^2 \tilde{\sigma} v^2G_{vv} + \tilde{\sigma}z^2G_{zz} + 2\pi_1\tilde{\sigma}vzG_{vz})+ \\ 
&&(\pi_1[\mu - \kappa(z-1)])vG_v + (\mu + \eta^2  - \sigma\eta\rho - \kappa(z-1))zG_z\} =0. \label{eq:HJBrewrite}
\end{eqnarray}
The first order condition for the maximization is
\begin{equation}
\pi_1^*\tilde{\sigma}vG_{vv} + \tilde{\sigma}zG_{vz} + [\mu - \kappa(z-1)]G_v = 0.
\end{equation}
Now assuming $G_{vv}<0$ the first order condition is sufficient, yielding
\begin{equation}
  \pi_1^* = - \frac{\tilde{\sigma}zG_{vz} + [\mu - \kappa(z-1)]G_v}{\tilde{\sigma}vG_{vv}}. \label{eq:optimWeightFirstOrder}
\end{equation}
Replacing \eqref{eq:optimWeightFirstOrder} back into \eqref{eq:HJBrewrite} yields:
\begin{eqnarray*}\nonumber
&&G_t + \frac{1}{2} \{ \frac{(\tilde{\sigma}zG_{vz} + [\mu - \kappa(z-1)]G_v)^2}{\tilde{\sigma}^2v^2G^2_{vv}}\tilde{\sigma}v^2G_{vv} + \tilde{\sigma}z^2G_{zz} \\ \nonumber
&&- 2\frac{\tilde{\sigma}zG_{vz} + [\mu - \kappa(z-1)]G_v}{\tilde{\sigma}vG_{vv}}\tilde{\sigma}vzG_{vz} \} + \left[\mu +\eta^2 -\sigma\eta\rho -\kappa(z-1)\right]zG_z \\\nonumber
&&- \frac{\tilde{\sigma}zG_{vz} + [\mu - \kappa(z-1)]G_v}{\tilde{\sigma}vG_{vv}}\left[\mu - \kappa(z-1)\right]vG_v   = 0.
\end{eqnarray*}
Multiplying both sides of equation by $\tilde{\sigma}G_{vv}$ we get:
\begin{eqnarray}\nonumber
&&\tilde{\sigma}G_tG_{vv} + \frac{1}{2}(\tilde{\sigma}zG_{vz} + [\mu -\kappa(z-1)]G_v)^2 - (\tilde{\sigma}zG_{vz} + [\mu - \kappa(z-1)]G_v)\tilde{\sigma}zG_{vz}\\ \nonumber
&&+ \frac{1}{2}\tilde{\sigma}z^2G_{zz}G_{vv} - (\tilde{\sigma}zG_{vz} + [\mu - \kappa(z-1)]G_v)[\mu -\kappa(z-1)]G_v \\ \nonumber
&&+ \tilde{\sigma}[\mu +\eta^2 -\sigma\eta\rho -\kappa(z-1)]zG_zG_{vz} = 0. \label{eq:HJBsimplified}
\end{eqnarray}
Expanding gives
\begin{eqnarray}\nonumber
&&\tilde{\sigma}G_tG_{vv} + \frac{1}{2}\tilde{\sigma}^2z^2G^2_{vz} + \frac{1}{2}[\mu -\kappa(z-1)]^2G^2_v + \tilde{\sigma}z[\mu -\kappa(z-1)]G_vG_{vz} \\ \nonumber
&&  -\tilde{\sigma}^2z^2G^2_{vz} - \tilde{\sigma}z[\mu -\kappa(z-1)]G_vG_{vz} +\frac{1}{2}\tilde{\sigma}z^2G_{zz}G_{vv} - \tilde{\sigma}z[\mu -\kappa(z-1)]G_vG_{vz}  \\ 
&& - [\mu -\kappa(z-1)]^2G^2_v + \tilde{\sigma}[\mu +\eta^2 -\sigma\eta\rho -\kappa(z-1)]zG_zG_{vz} = 0.
\end{eqnarray}
Which further simplifies to
\begin{eqnarray} \nonumber
&& \tilde{\sigma}G_tG_{vv} -\frac{1}{2}[\mu -\kappa(z-1)]^2G^2_v + \frac{1}{2}\tilde{\sigma}z^2G_{zz}G_{vv} - \tilde{\sigma}z[\mu -\kappa(z-1)]G_vG_{vz}  \\  
&& -  \frac{1}{2}\tilde{\sigma}^2z^2G^2_{vz} + \tilde{\sigma}[\mu +\eta^2 -\sigma\eta\rho -\kappa(z-1)]zG_zG_{vz} = 0. \label{eq:simplPDE}
\end{eqnarray}
At this stage we were able to turn our four variable PDE into three, but we can get eliminate one more. For this we consider the following separation ansatz:
\begin{equation}
G(t, v, z) = f(t,z)v^\gamma, \label{eq:firstAnsatz}
\end{equation}
with the terminal condition
\begin{equation}
f(T,z) = 1  \ \ \ \ \forall z.   
\end{equation}
We compute the derivatives of \eqref{eq:firstAnsatz}:
\begin{eqnarray*}
& G_t = f_tv^\gamma, \ \ \ G_v = fv^{\gamma -1}\gamma , \ \ G_z=f_zv^\gamma , \ \ \
G_{vv} = fv^{\gamma -2}\gamma (\gamma -1), \\
& G_{vz} = f_zv^{\gamma -1}\gamma , \ \ \ G_{zz} =v^\gamma f_{zz}.
\end{eqnarray*}
Replace derivative back into \eqref{eq:simplPDE} and divide by $v^{2(\gamma -1)}\gamma$ to get
\begin{eqnarray}\nonumber
&&\tilde{\sigma}(\gamma-1) ff_t  - \frac{1}{2}\tilde{\sigma}^2\gamma z^2f_z^2 - \frac{1}{2}\gamma[\mu -\kappa(z-1)]^2f +\frac{1}{2}\tilde{\sigma}(\gamma-1)z^2ff_{zz} -\\ 
&& \tilde{\sigma}\gamma [\mu -\kappa(z-1)]zff_z +  \tilde{\sigma}(\gamma -1)[\mu + \eta^2 -\sigma\eta\rho -\kappa(z-1)]ff_z=0. \label{eq:PDE_for_DGM}
\end{eqnarray} 
We now have a PDE with only two variables instead of four.


\begin{thebibliography}{9}
\bibitem{SolvingPDEwithDL} [1] Al-Aradi, A., Correia, A., Naiff, D., Jardim, G. and Sapito, Y. Solving nonlinear
and high-dimensional partial differential equations via deep learning. (2018) \emph{arXiv:1811.08782} 

\bibitem{Alexander_Optim_heging} [2] Alexander, C. Optimal hedging using cointegration. \emph{Philosophical Transactions of
the Royal Society a Mathematical Physical and Engineering Sciences} 357 (1999), 2039–2058.

\bibitem{Alexander} [3] Alexander, C.\emph{Market models: a guide to financial data analysis}. John Wiley Sons, Inc., Chichester, West Sussex, (2001).

\bibitem{Benaroya} [4] Benaroya, H., Han, S.M. and Nagurka, M. \emph{Probability models in engineering and science.}  CRC Press, (2005).

\bibitem{Bodie} [5] Bodie, Z., Kane, A. and Marcus, A. \emph{Investments.} 4th edition, Irwin/McGraw-Hill, Chicago, (1999).

\bibitem{CIR} [6] Cox, J.C., Ingersoll, J.E. and Ross, S.A. A theory of the term structure of interest rates. \emph{Econometrica} 53 (1985), 385-407.

\bibitem{EngleGranger} [7] Engle, R. and Granger, C. \emph{Long-run Economic Relationships: Readings in Cointegration}. Oxford University Press, Oxford, New York, (1991).

\bibitem{Garciaa} [8] Garcia, R., Gonzalez, V., Contreras, J. and Custodio, J. Applying modern portfolio theory for a dynamic energy portfolio allocation in electricity markets. \emph{Electric Power
Systems Research} 150 (2017), 11–22.

\bibitem{HarrisonKreps} [9] Harrison, J. and Kreps, D. Martingales and arbitrage in multiperiod securities markets. \emph{Journal of Economic Theory} 20 (1979), 381–408.

\bibitem{KornKraft} [10] Korn, R., and Kraft, H. A Stochastic Control Approach to Portfolio Problems with Stochastic Interest Rates. \emph{SIAM Journal on Control and Optimization} 4 (2002), 1250-1269.

\bibitem{LiNg} [11] Li, D. and Ng, W. Optimal dynamic portfolio selection: Multiperiod mean-variance formulation. \emph{Mathematical Finance} 10  (2000), 387–406.

\bibitem{CointAndInferred} [12] Mahdavi-Damghani, B. The non-misleading value of inferred correlation: An introduction to the cointelation model. \emph{Wilmott Magazine} 67 (2013), 0-61.

\bibitem{bandwiseGaussianMix} [13] Mahdavi-Damghani, B. and Roberts, S. A Proposed Risk Modeling Shift from the Approach of Stochastic Differential Equation towards Machine Learning Clustering: Illustration with the Concepts of Anticipative \& Responsible VaR. (2017), SSRN:3039179.

\bibitem{Damgani} [14] Mahdavi-Damghani, B.,  Welch, D., O'Malley, K. and Knights, S. The misleading value of measured correlation. \emph{Wilmott Magazine} 61 (2013), 64-73.

\bibitem{MudchanatongsukPrimbsWong} [15] Mudchanatongsuk, S., Primbs, J., and Wong, W. Optimal Pairs Trading: A Stochastic Control Approach. In \emph{American Control Conference} (Seattle, Washington, USA, 2008), IEEE.

\bibitem{DGM} [16] Sirignano, J. and Spiliopoulos, K. DGM: A deep learning algorithm for solving partial differential equations. \emph{Journal of Computational Physics} 375 (2018), 1339 – 1364.

\bibitem{Soeryana} [17] Soeryana, E., Fadhlina, N., Sukono, Rusyaman, E., and Supian, S. Mean-variance portfolio optimization by using time series approaches based on logarithmic utility function. In \emph{Materials Science and Engineering} (2017), IOP.

\bibitem{OU} [18] Uhlenbeck, G.E., and Ornstein, L.S. On the theory of Brownian Motion. \emph{Physical Review} 35 (1930), 823-841.

\bibitem{Vidyamurthy} [19] Vidyamurthy, G. \emph{Pairs Trading: Quantitative Methods and Analysis.} John Wiley Sons, Inc., Hoboken, New Jersey, (2004).

\bibitem{Wilmott} [20] Wilmott, P. \emph{Paul Wilmott Introduces Quantitative Finance}. John Wiley Sons Ltd., Chichester, West Sussex, (2007).

\end{thebibliography}



\pagebreak
\clearpage
\begin{table}[H]
\begin{center} 
\resizebox{0.7\linewidth}{!}{%
  \begin{tabular}{ | l  c |  }
    \hline
    Criterion &  Average portfolio return $\hat{R}(r_p)$\\ \hline
    MVC & 35\% \\ 
    SC & 61\% \\ 
    DS & 83\% \\ \hline
  \end{tabular}}
  \end{center}
  \caption{Average over 500 simulations of portfolio returns at terminal time $T$ (day 1000) with dynamic switching (DS) is higher than average portfolio return with only stochastic control (SC) or only mean-variance-criterion (MVC).} \label{tab:Table_Averag}  
\end{table}
\pagebreak
\clearpage
\begin{itemize}
    \item[Figure \ref{fig:CointelationWithRhoM1}:] (Up) Simulated path of cointelation model \eqref{eq:modelCointel} with $\rho=-1$, $\theta=0.1$, $\sigma=0.01$; (Down) Corresponding measured correlation \eqref{eq:measuredCorr} as a function of the time increment increases from $-1$ to $1$.
    \item[Figure \ref{fig:birdEyeDGM}:] Bird’s-eye perspective of overall DGM architecture \cite{SolvingPDEwithDL}.
    \item[Figure \ref{fig:detailedDGM}:] Operations within a single DGM layer \cite{SolvingPDEwithDL}
    \item[Figure \ref{fig:mertonNNAnalitycal}:] Analytical solution of the Merton Problem.
    \item[Figure \ref{fig:mertonNN}] Approximate solution of Merton problem using DGM.
    \item[Figure \ref{fig:mertonerror}:] Error between analytical and approximate solution of Merton problem.
    \item[Figure \ref{fig:changeInRhoMu}:] Approximate solutions to PDE \eqref{eq:PDE_for_DGM} with DGM for four different scenarios of $\rho$ and $\mu$ and fixed $\sigma=0.2$, $\eta=0.19$, $\gamma=0.5$.
    \begin{itemize}
        \item[(a)] Approximate solution with low $\mu = 0.01$ and low $\rho=-0.5$.
        \item[(b)] Approximate solution with low $\mu = 0.01$ and high $\rho = 0.5$.
        \item[(c)] Approximate solution with high $\mu = 0.4$ and low $\rho = -0.5$.
        \item[(d)] Approximate solution with high $\mu = 0.4$ and high $\rho =0.5$.
    \end{itemize}
    \item[Figure \ref{fig:GaussianMixture}:] Two examples of Gaussian Mixture Simulations with different number of bands. 
    \begin{itemize}
        \item[(a)] Empirical distribution of random variable sampled from cointelation model \eqref{eq:modelCointel} in three different zones described in Figure \ref{fig:BplusBminus}.
        \item[(b)] Empirical distribution of random variable sampled from cointelation model \eqref{eq:modelCointel} in five different zones: two additional zones were added to the initial three zones in Figure \ref{fig:BplusBminus}.
    \end{itemize}
    \item[Figure \ref{fig:MLvsDS_1path}:] (a) one simulated scenario based on cointelation model \eqref{eq:modelCointel} with parameters: $\mu=0.05, \sigma=0.17, \eta=0.16, \kappa=0.1, \rho = -0.6$ and scaled spread: $\kappa(X_t - Y_t) $; (b)  portfolio return and optimal weight of asset $X$ with Dynamic Switching approach; (c) portfolio return and optimal weight of asset $X$ and $Y$ with Machine Learning approach.
    \item[Figure \ref{fig:SC_vs_ML}:] Histogram of excess ($P\&L$) for $ML_{LS}$ vs SC at terminal time $T$.
    \item[Figure \ref{fig:DS_vs_ML}:] Histogram of excess ($P\&L$) for ML vs FM at terminal time $T$.
    \item[Figure \ref{fig:DS_spread_weights}:] In FM approach optimal long/short strategies are more volatile than optimal long only strategies.
    \item[Figure \ref{fig:ML_spread_weights}:] In ML approach optimal long/short strategies are slightly more volatile than optimal long only strategies.
\end{itemize}

\pagebreak
\clearpage
\begin{figure*}[htb!]
\begin{center}
\includegraphics[width=1.0\textwidth]{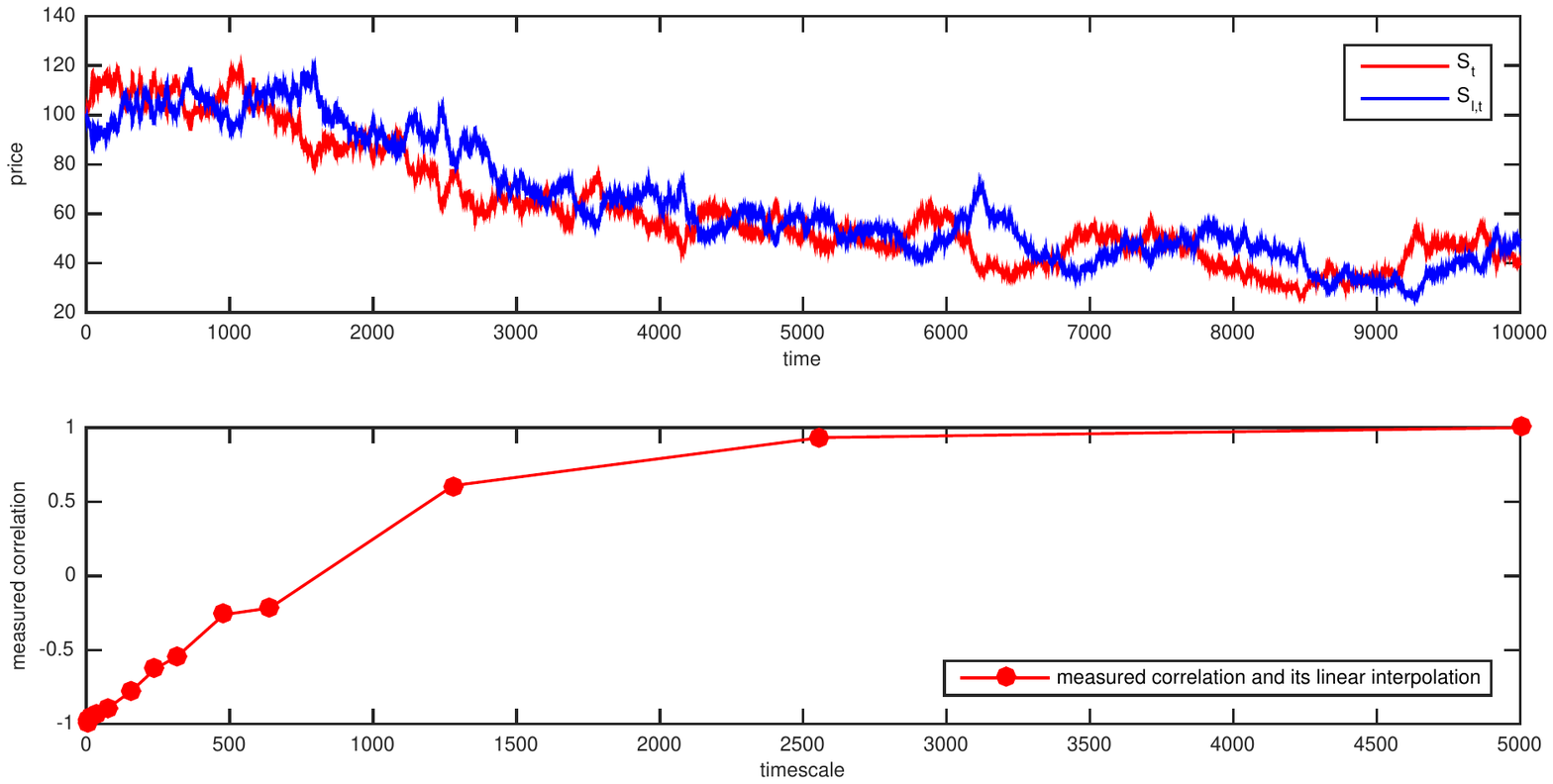}
\caption{}
\label{fig:CointelationWithRhoM1}
\end{center}
\end{figure*}

\pagebreak
\clearpage
\begin{figure*}[htb!]
\begin{center}
\includegraphics[width=0.7\textwidth]{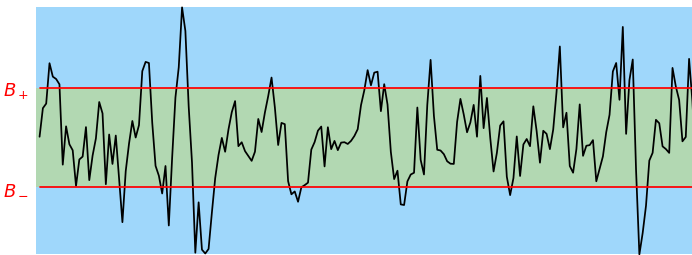}
\caption{}
\label{fig:BplusBminus}
\end{center}
\end{figure*}

\pagebreak
\clearpage
\begin{figure*}[htb!]
    \centering
    \includegraphics{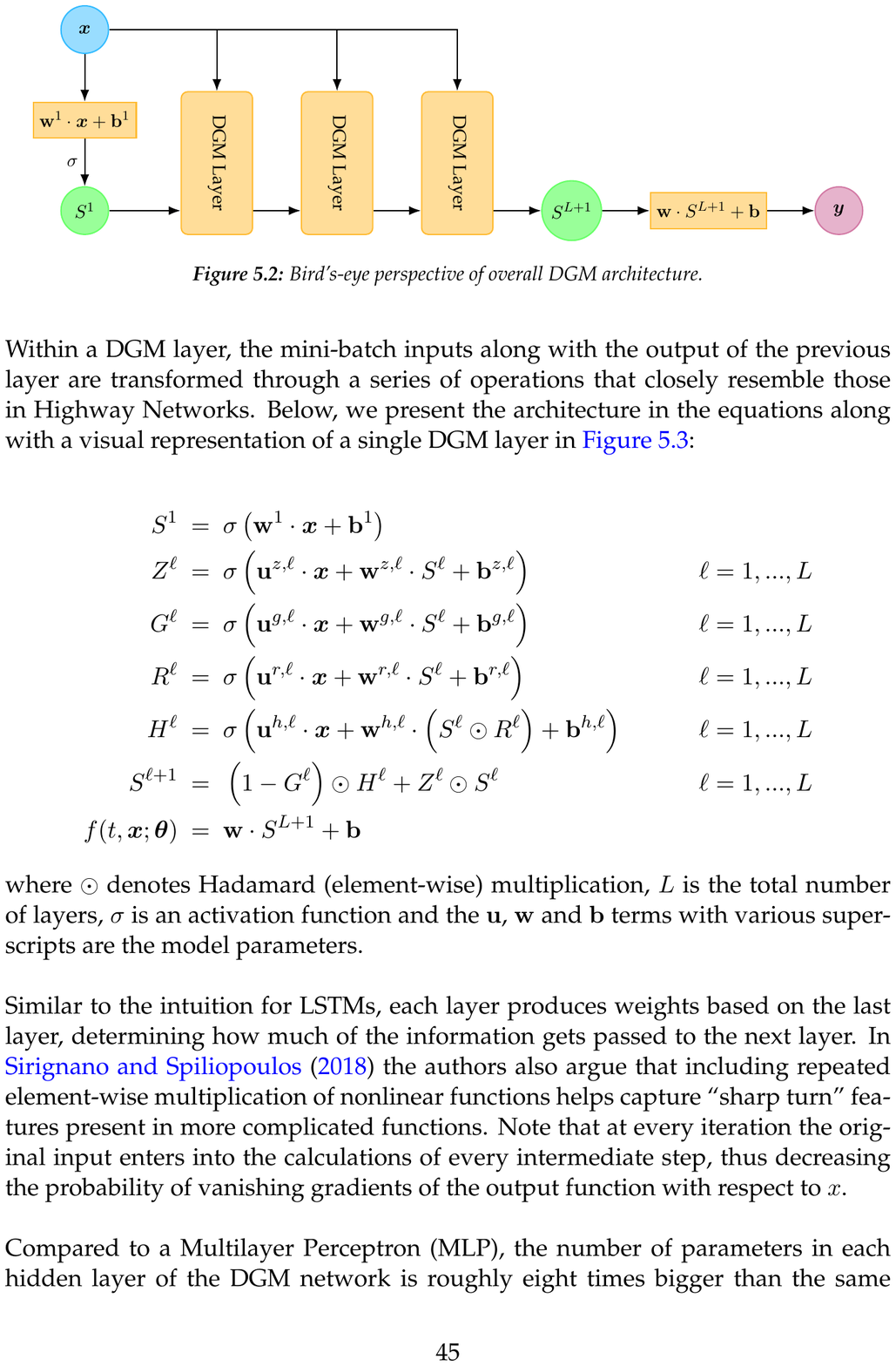}
    \caption{} 
    \label{fig:birdEyeDGM}
\end{figure*}

\pagebreak
\clearpage
\begin{figure*}[htb!]
    \centering
    \includegraphics{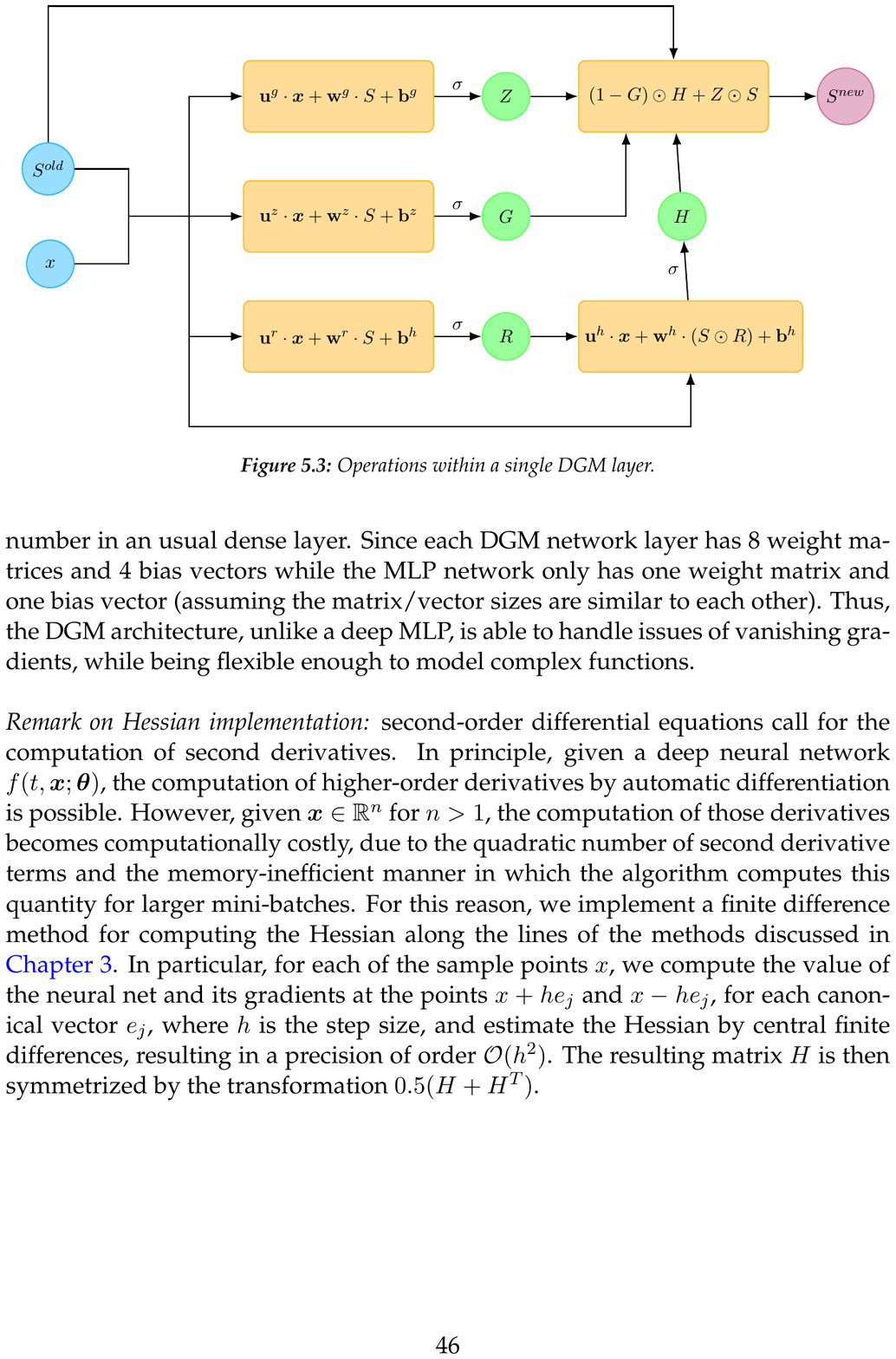}
    \caption{}
    \label{fig:detailedDGM}
\end{figure*}

\pagebreak
\clearpage
\begin{figure}[htb]
\begin{center}
\includegraphics[width=0.6\textwidth]{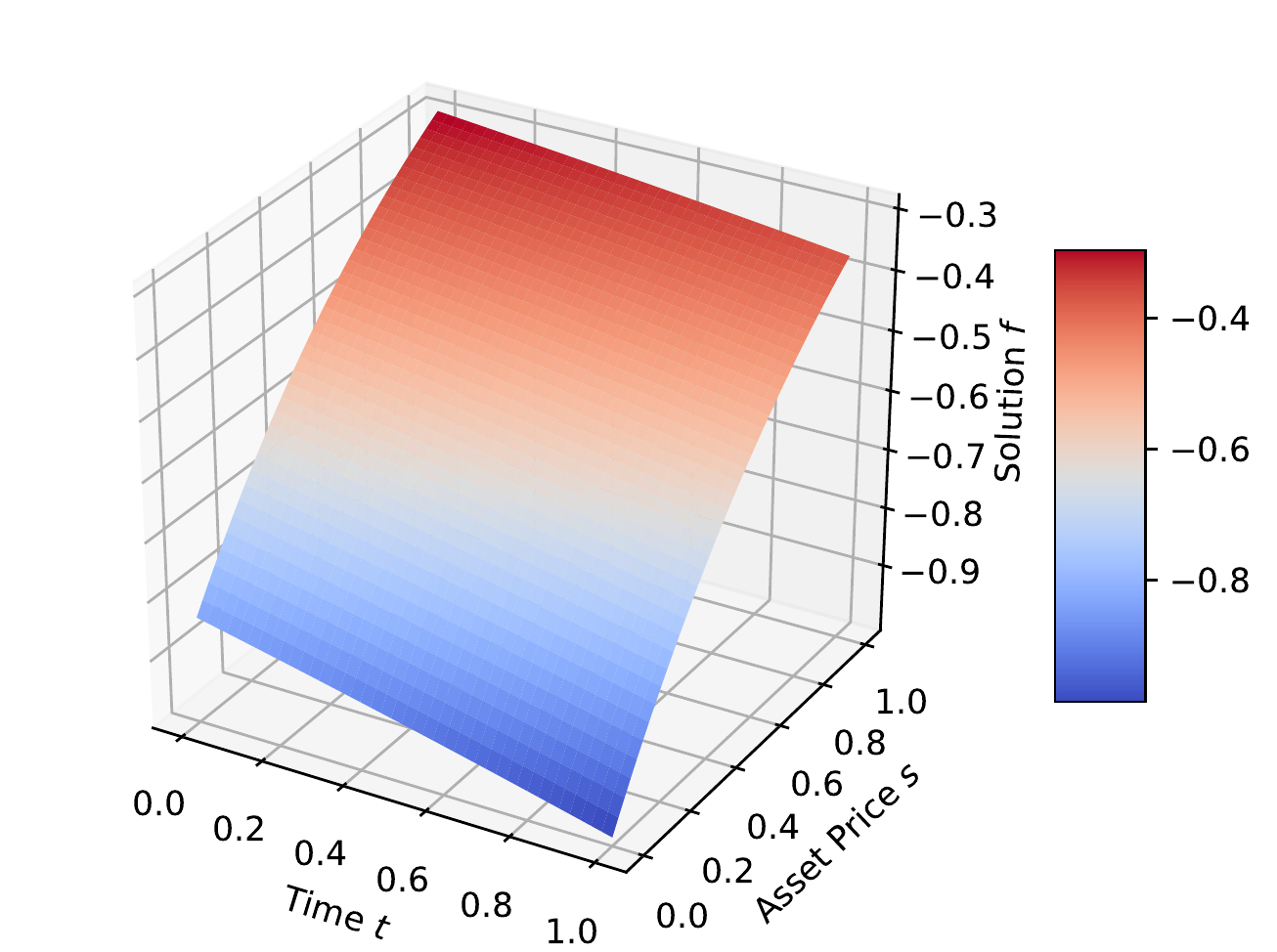}
\caption{}
\label{fig:mertonNNAnalitycal}
\end{center}
\end{figure}

\pagebreak
\clearpage
\begin{figure}[htb]
\begin{center}
\includegraphics[width=0.6\textwidth]{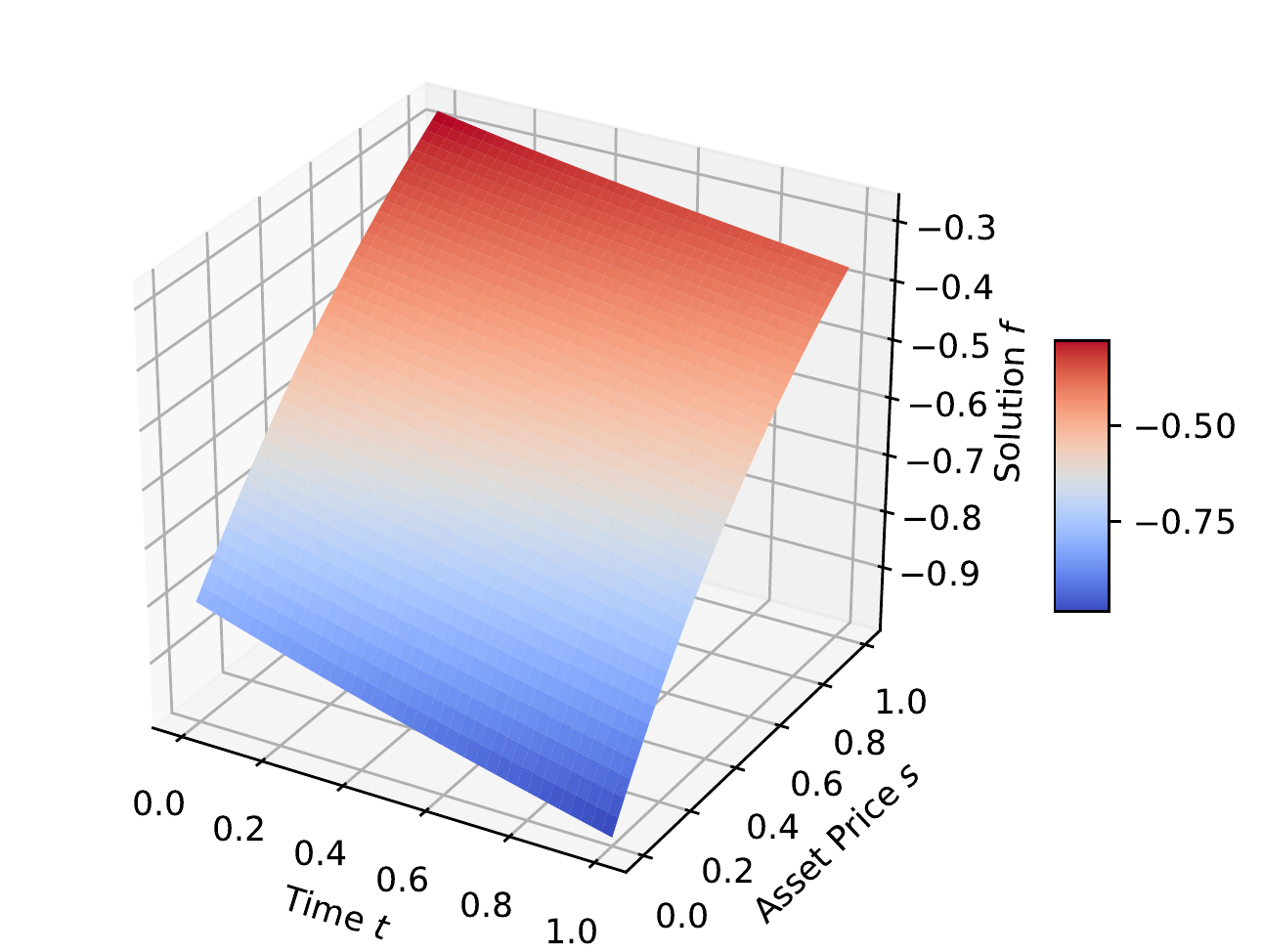}
\caption{}
\label{fig:mertonNN}
\end{center}
\end{figure}

\pagebreak
\clearpage
\begin{figure}[htb]
\begin{center}
\includegraphics[width=0.6\textwidth]{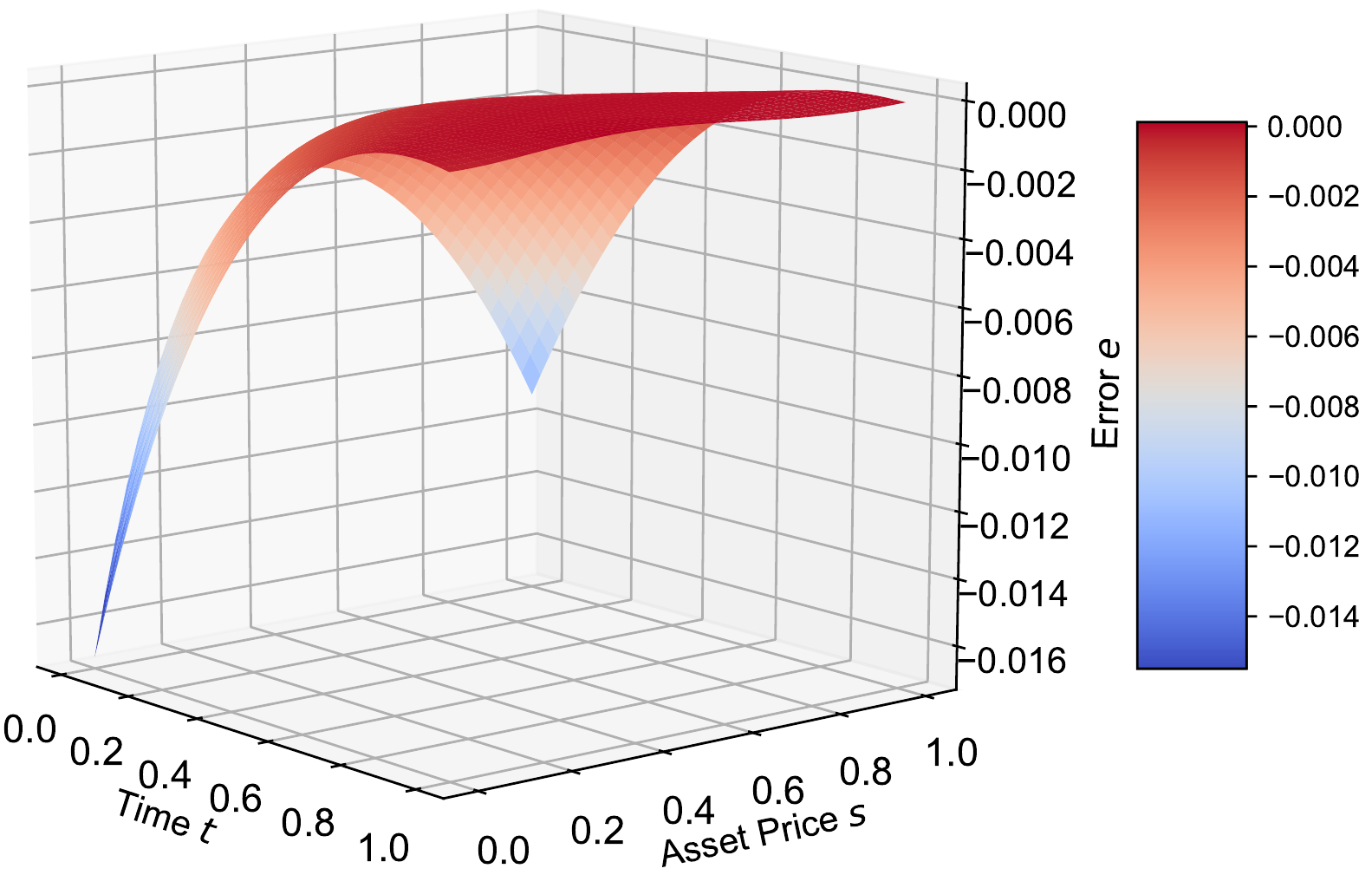}
\caption{}
\label{fig:mertonerror}
\end{center}
\end{figure}

\pagebreak
\clearpage
\begin{figure*}
\centering
\subcaptionbox{}{\includegraphics[width=0.45\linewidth]{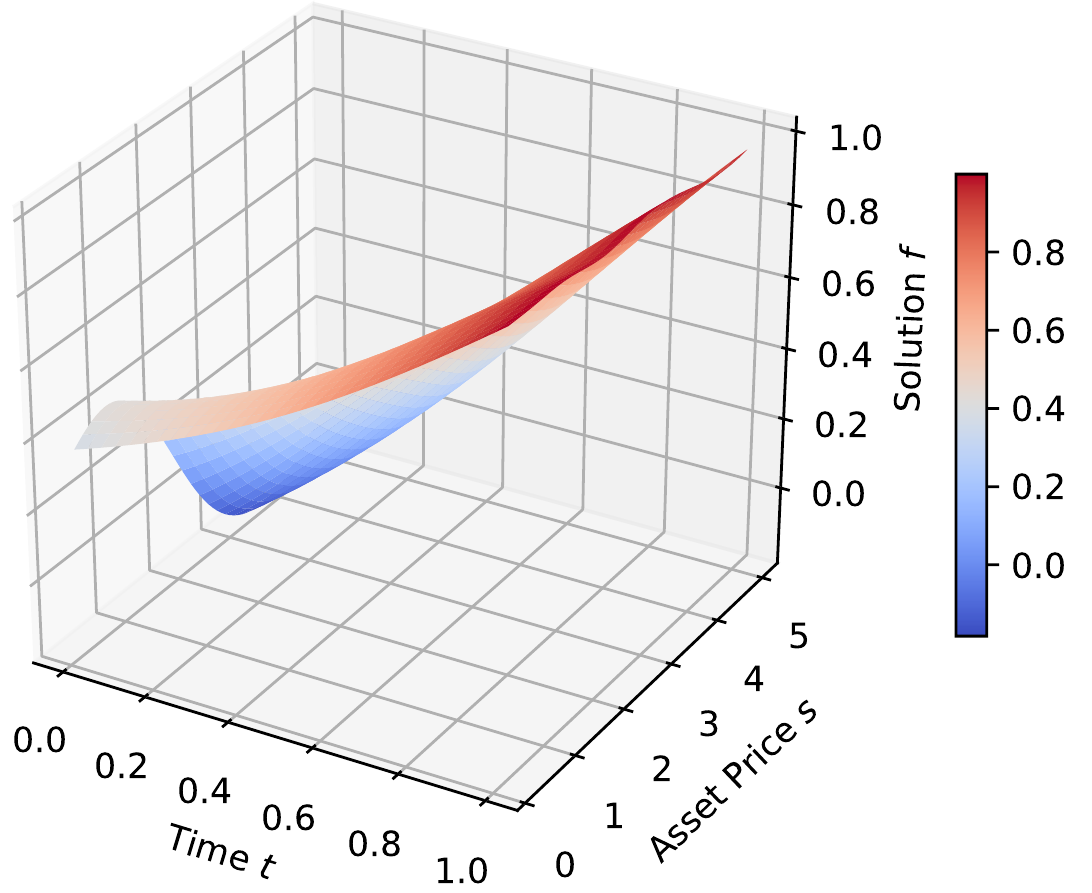}}
\subcaptionbox{}{\includegraphics[width=0.45\linewidth]{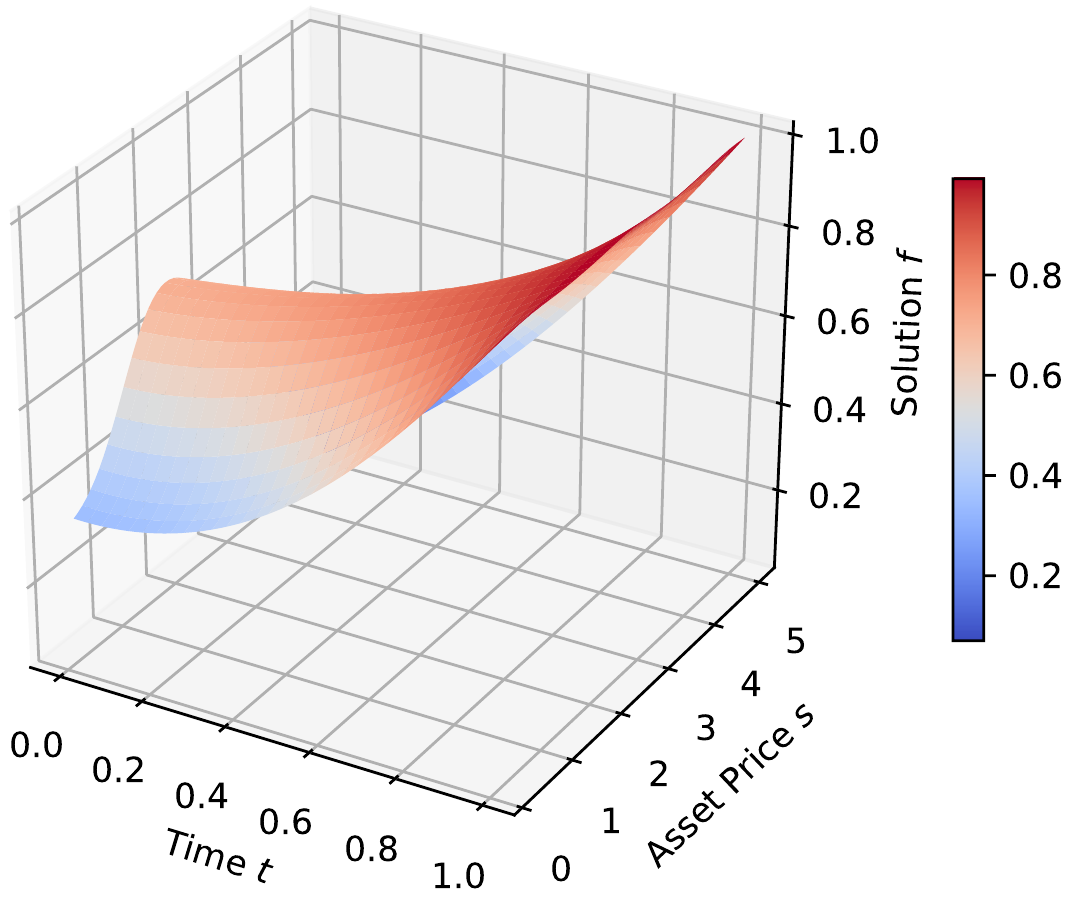}}
\subcaptionbox{}{\includegraphics[width=0.45\linewidth]{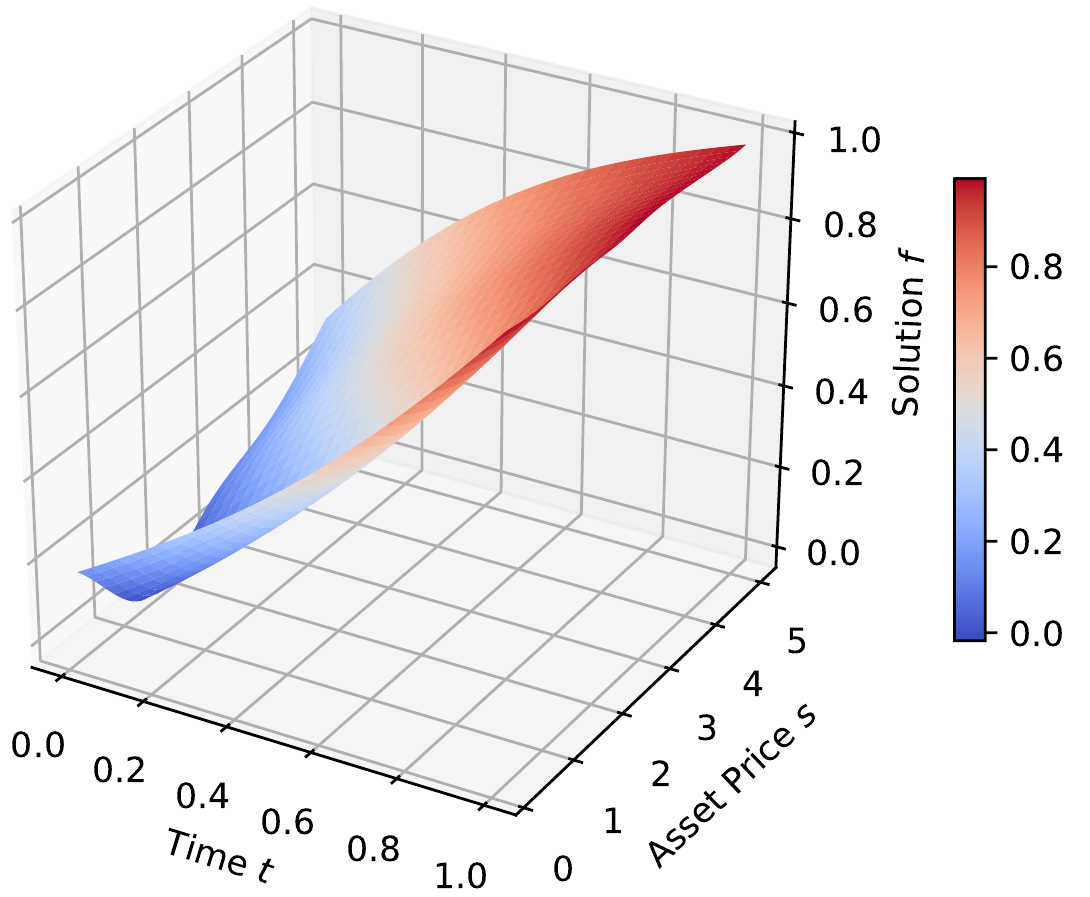}}
\subcaptionbox{}{\includegraphics[width=0.45\linewidth]{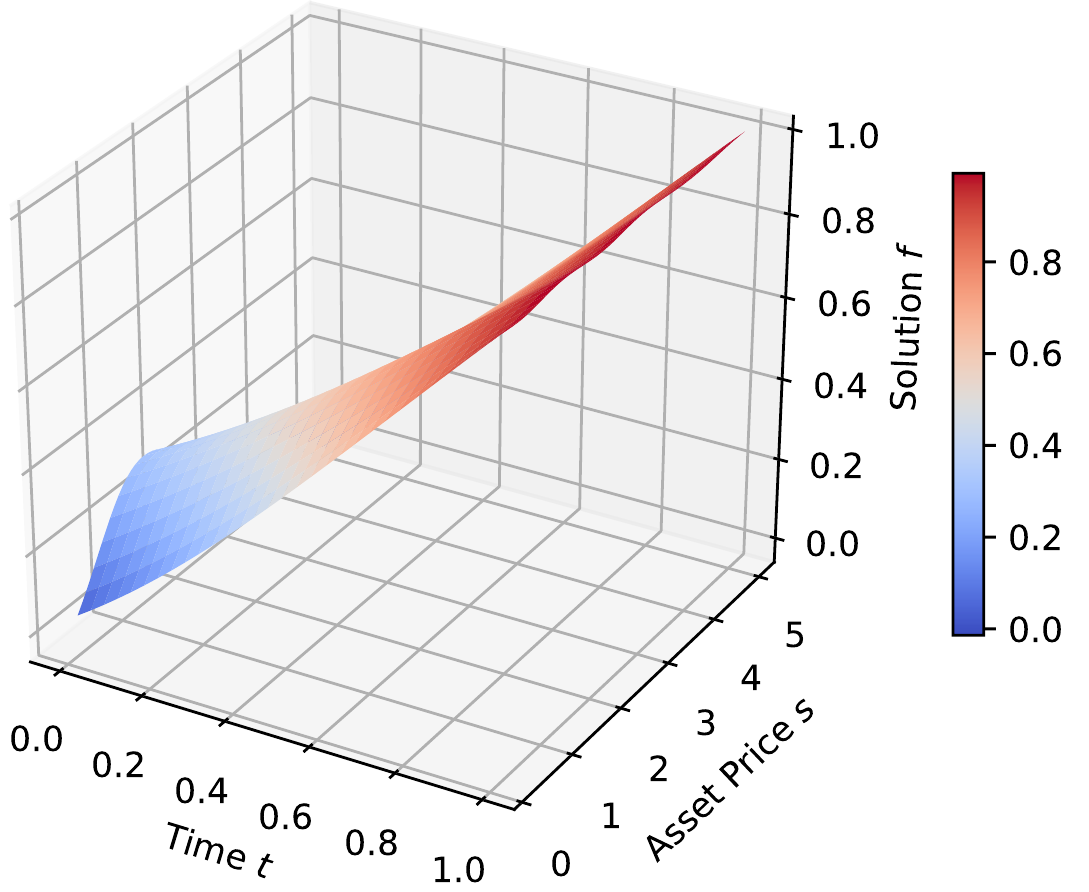}}
\caption{} \label{fig:changeInRhoMu}
\end{figure*}

\pagebreak
\clearpage
\begin{figure*}
\centering
\subcaptionbox{}
{\includegraphics[height=15cm, width=0.49\linewidth]{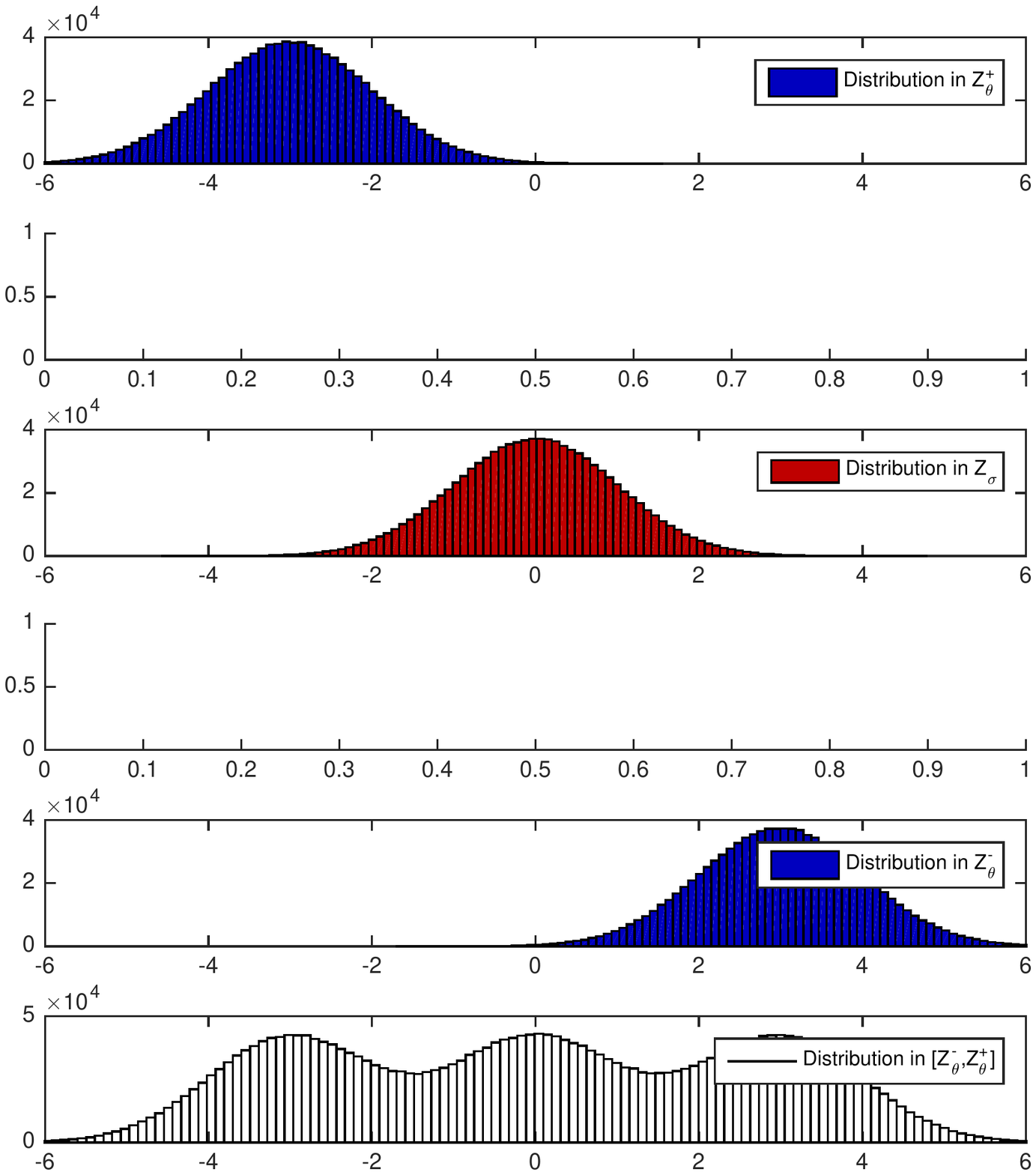}}
\subcaptionbox{}
{\includegraphics[height=15cm, width=0.49\linewidth]{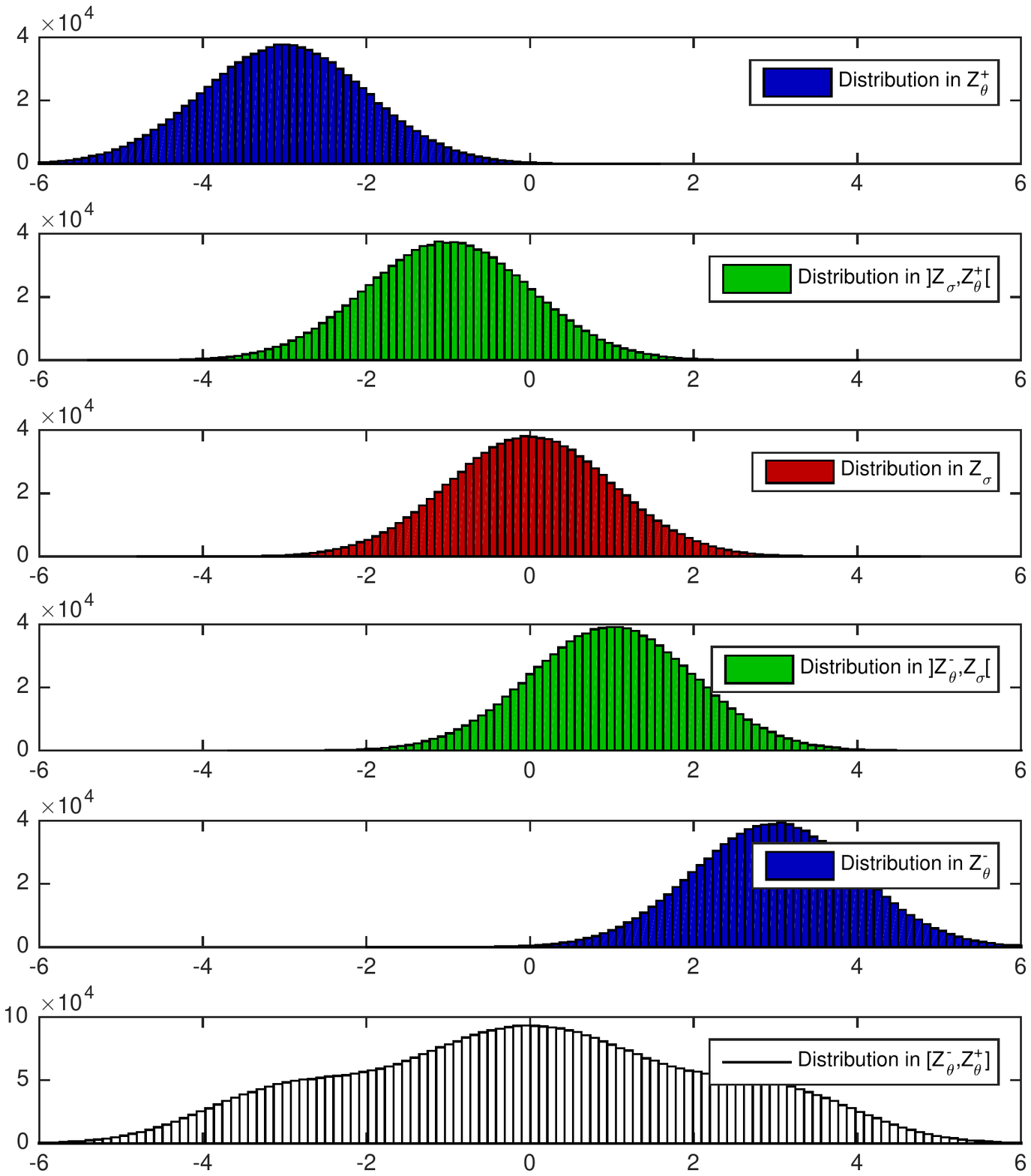}}
\caption{} \label{fig:GaussianMixture}
\end{figure*}

\pagebreak
\clearpage
\begin{figure*}[htb!]
\begin{center}
\includegraphics[width=1.0\textwidth]{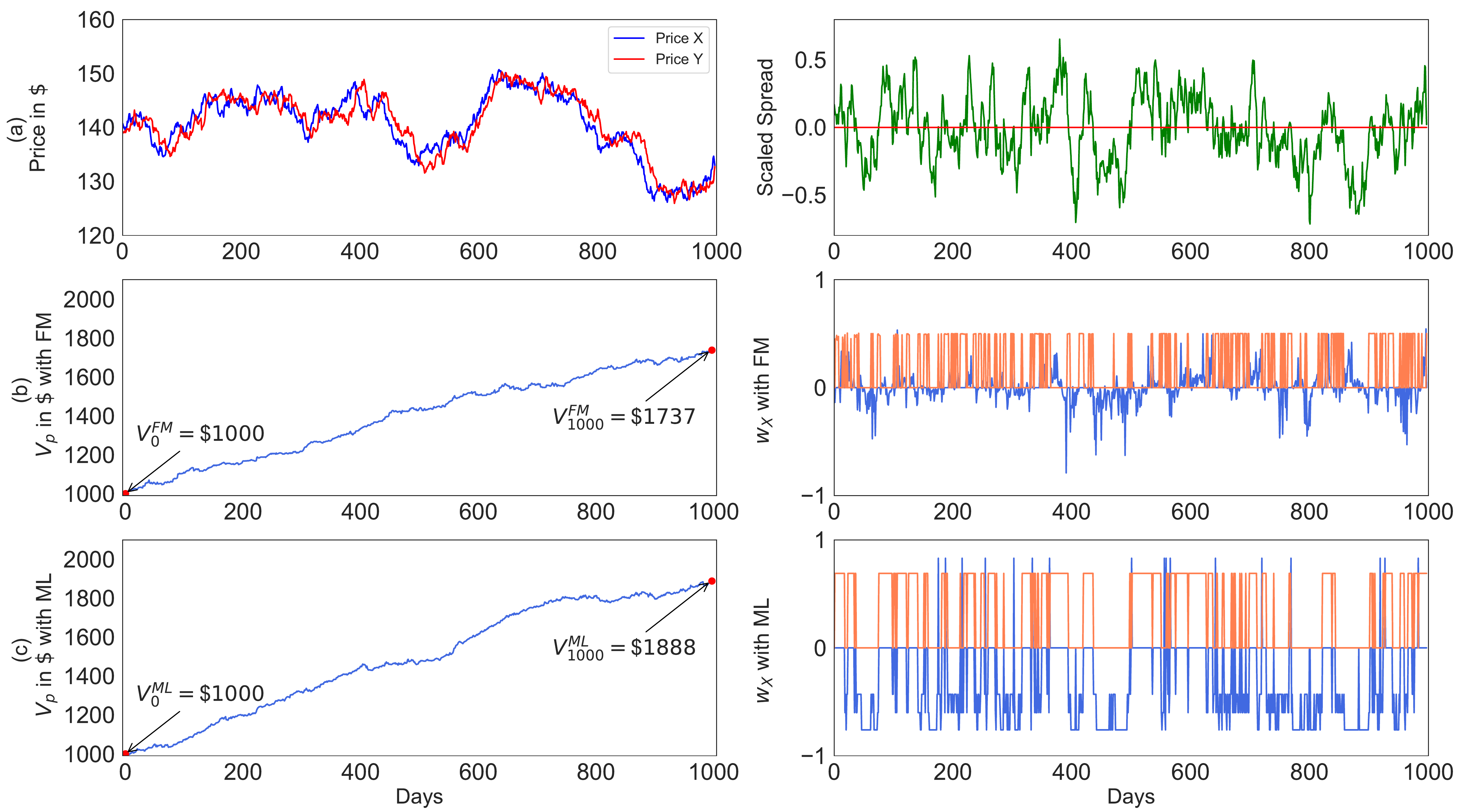}
\caption{}
\label{fig:MLvsDS_1path}
\end{center}
\end{figure*}

\pagebreak
\clearpage
\begin{figure*}[htb!]
\begin{center}
\includegraphics[width=1.0\linewidth]{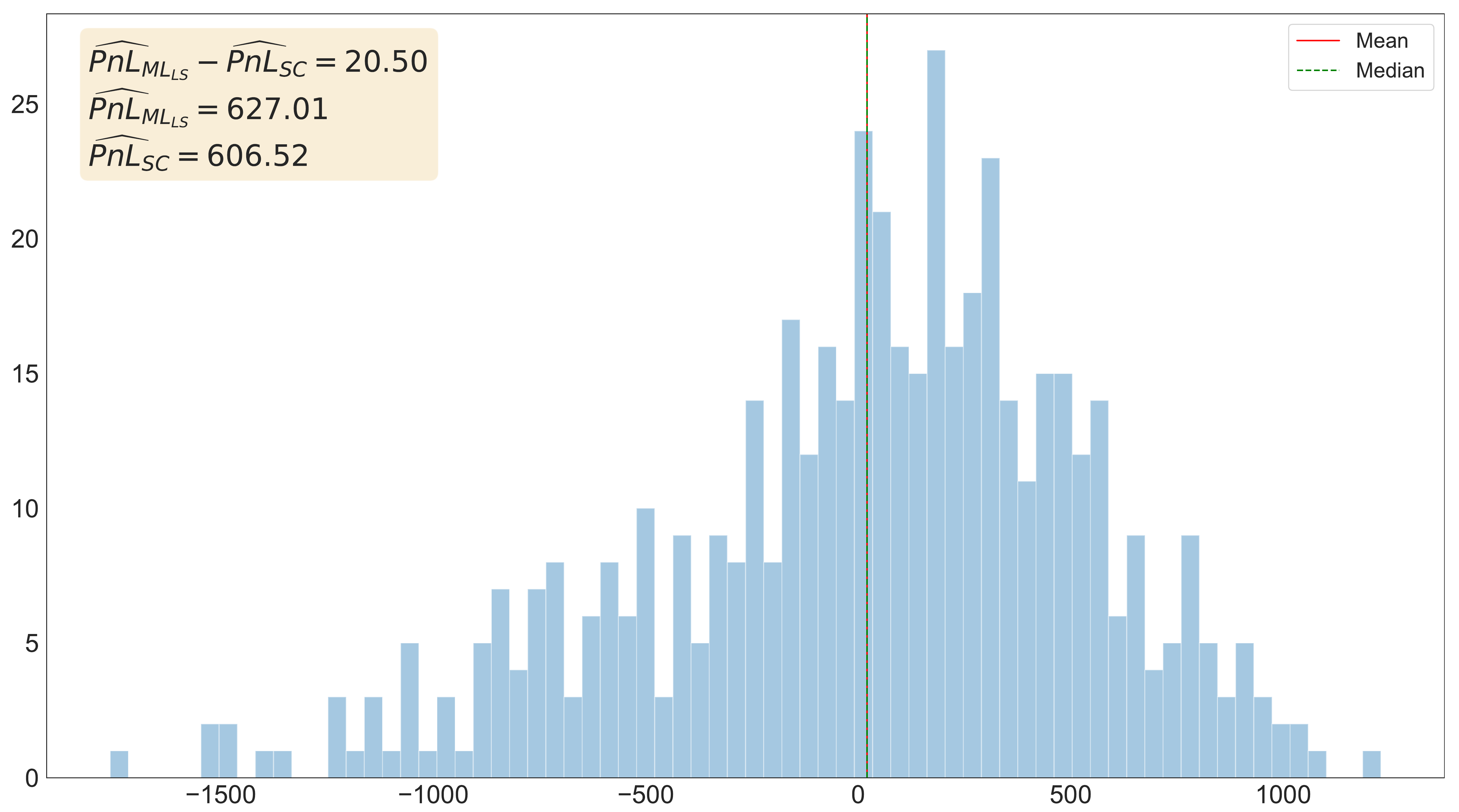}
\caption{}
\label{fig:SC_vs_ML}
\end{center}
\end{figure*}

\pagebreak
\clearpage
\begin{figure*}[htb!]
\begin{center}
\includegraphics[width=1.0\linewidth]{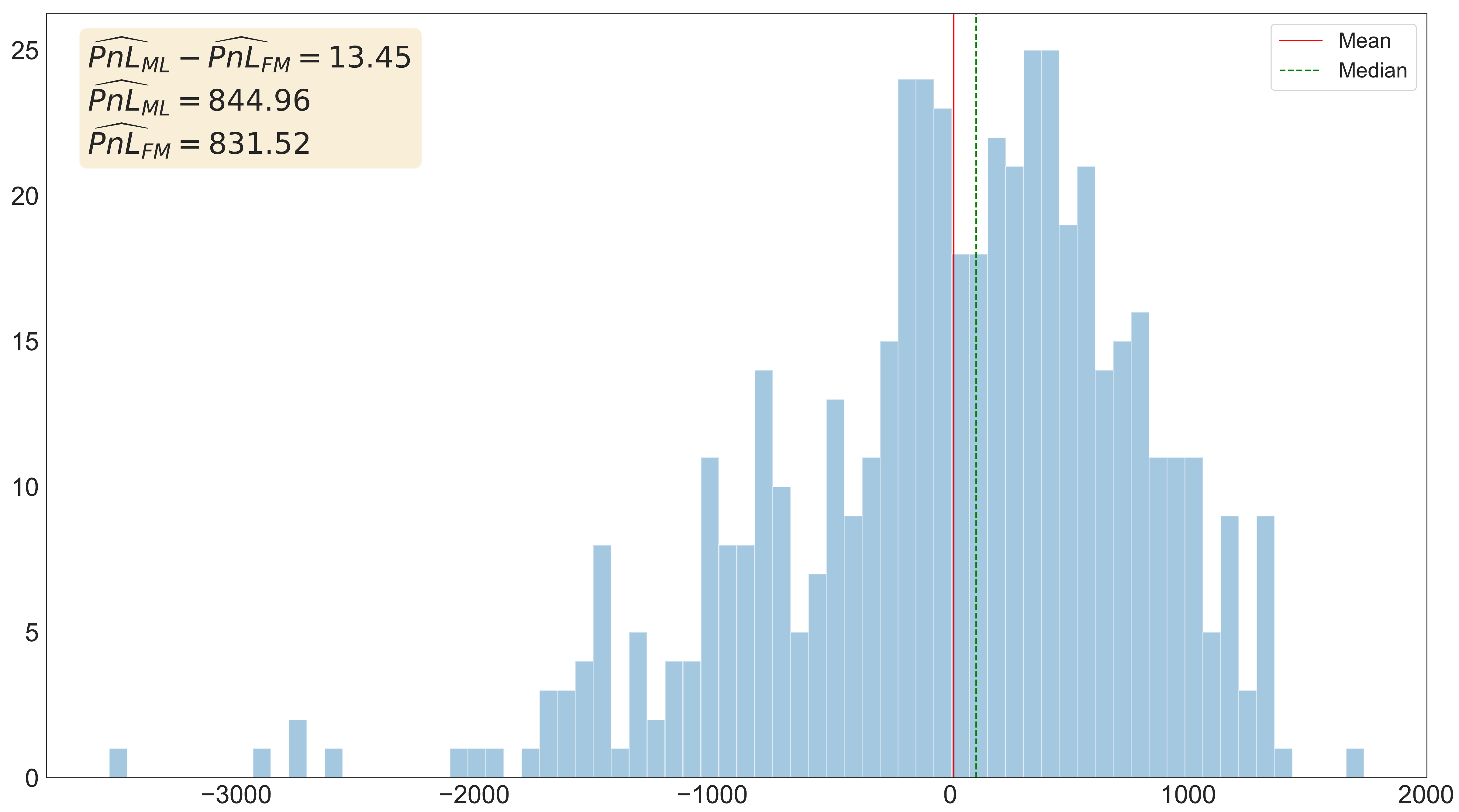}
\caption{} 
\label{fig:DS_vs_ML}
\end{center}
\end{figure*}

\pagebreak
\clearpage
\begin{figure*}[htb!]
\begin{center}
\includegraphics[width=1.0\linewidth]{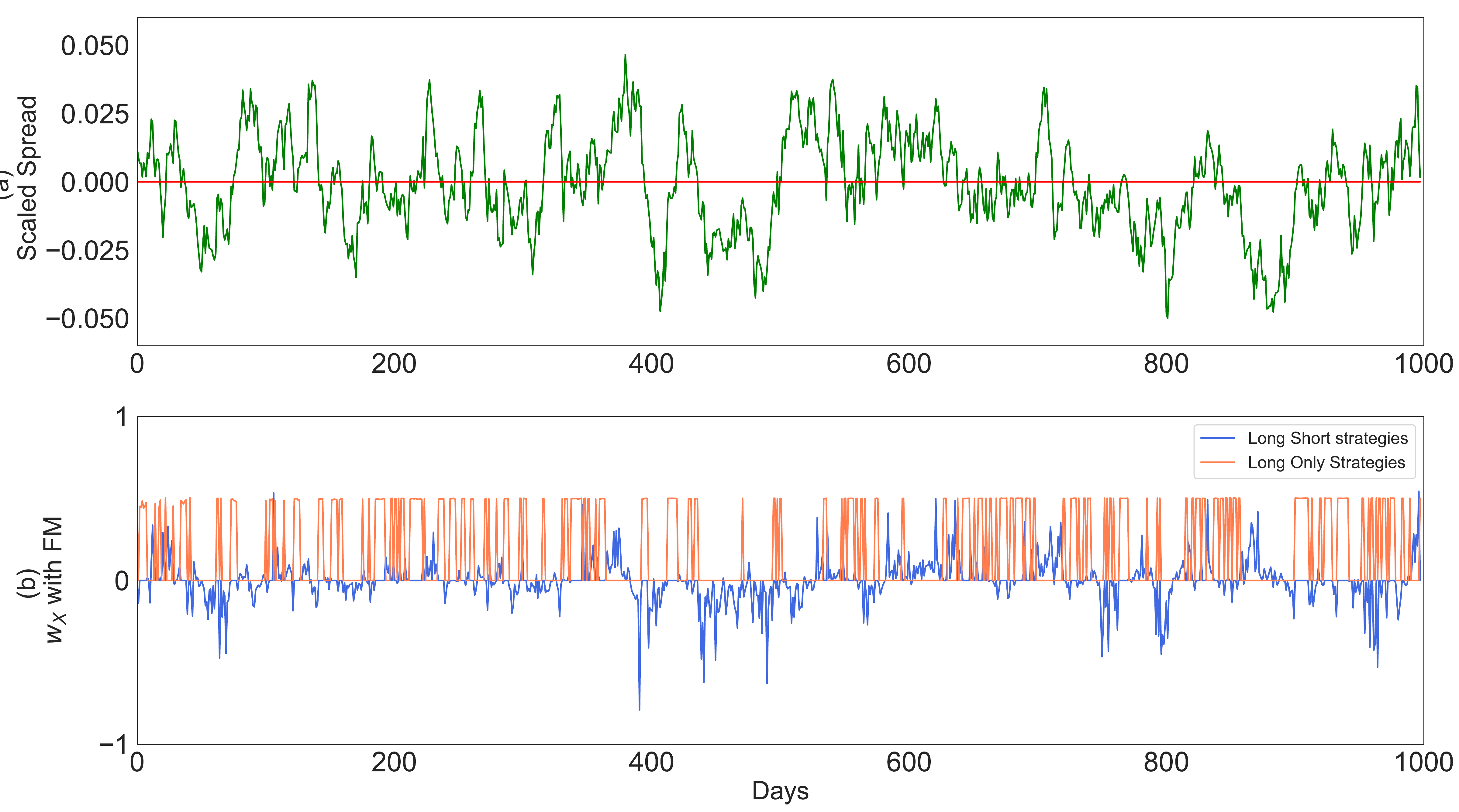}
\caption{}
\label{fig:DS_spread_weights}
\end{center}
\end{figure*}

\pagebreak
\clearpage
\begin{figure*}[htb!]
\begin{center}
\includegraphics[width=1.0\linewidth]{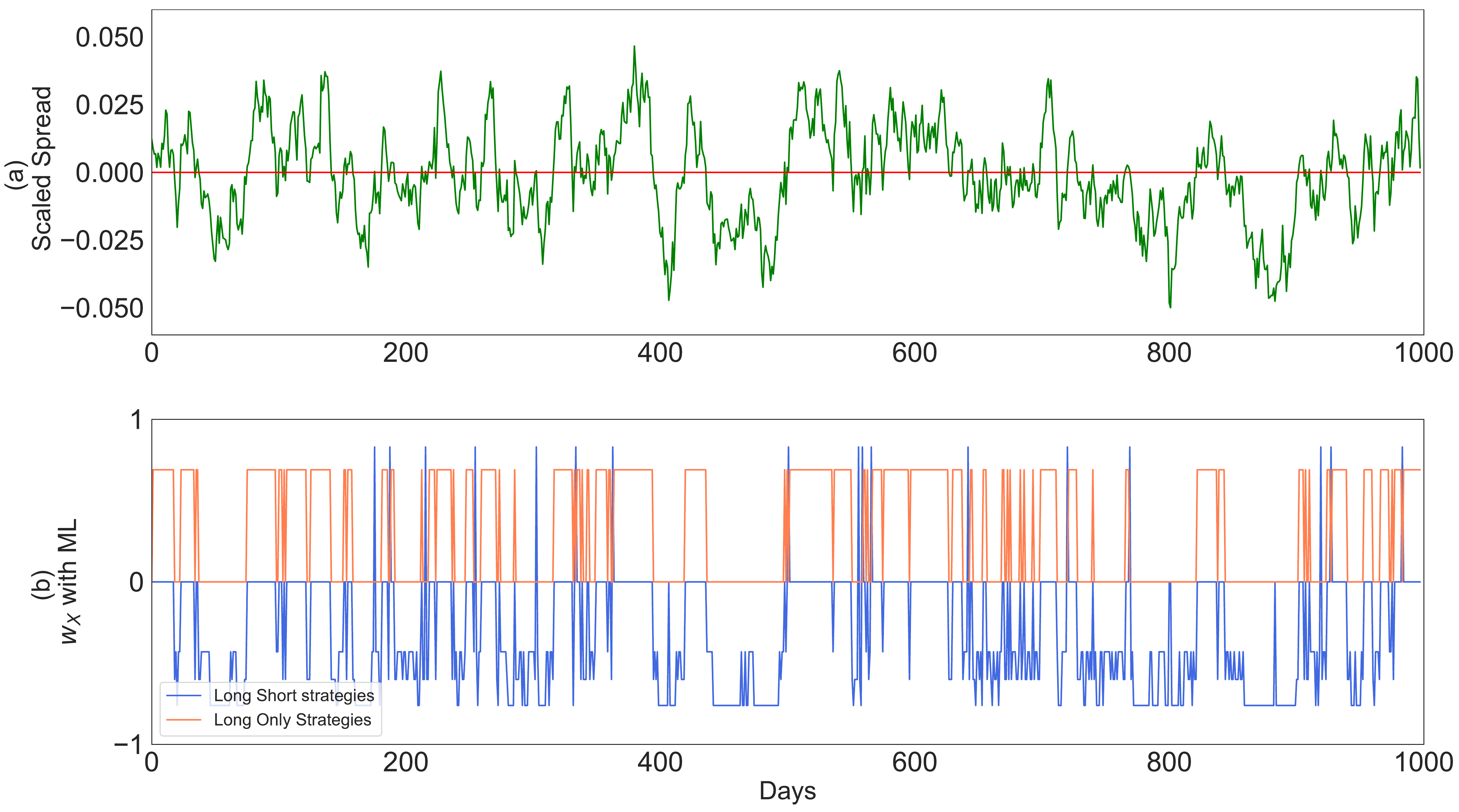}
\caption{}
\label{fig:ML_spread_weights}
\end{center}
\end{figure*}

\pagebreak
\clearpage

\end{document}